\pdfoutput=1
\relax
\documentclass[letterpaper]{article} 
\usepackage{aaai18}  
\usepackage{times}  
\usepackage{helvet}  
\usepackage{courier}  
\usepackage{url}  
\usepackage{graphicx}  
\frenchspacing  
\setlength{\pdfpagewidth}{8.5in}  
\setlength{\pdfpageheight}{11in}  

\pdfinfo{
/Title (Planning and Synthesis Under Assumptions)
/Author (XX)}
\setcounter{secnumdepth}{1} 

\usepackage[usenames, dvipsnames]{color}
\usepackage[inline]{enumitem}
\usepackage{times}
\usepackage{latexsym}
\usepackage{latexsym}
\usepackage{amsmath}
\usepackage{amssymb}
\usepackage{amsthm}
\usepackage{latexsym}
\usepackage{amsmath}
\usepackage{amssymb}
\usepackage{amsthm}

\usepackage{xspace,color,upgreek} 
\usepackage{graphicx}
\usepackage{tikz,pgf}
  \usetikzlibrary{automata,positioning,matrix,calc,petri,arrows}

\theoremstyle{plain}
\newtheorem{theorem}{Theorem}
\newtheorem{lemma}[theorem]{Lemma}
\newtheorem{remark}{Remark}

\newtheorem{example}{Example}
\newtheorem{definition}{Definition}
\newtheorem{corollary}[theorem]{Corollary}






\def\it{\begin{itemize} }
\def\-{\item }
\def\ti{\end{itemize} }
\def\en{\begin{enumerate} }
\def\ne{\end{enumerate} }


\def\exptime{\textsc{exptime}\xspace}

\def\ptime{\textsc{ptime}\xspace}



\DeclareMathOperator{\nextX}{\mathsf{X}}

\DeclareMathOperator{\until}{\mathbin{\mathsf{U}}}

\DeclareMathOperator{\always}{\mathsf{G}}

\DeclareMathOperator{\eventually}{\ensuremath{\mathsf{F}}\xspace}

\newcommand{\true}{\mathsf{true}}
\newcommand{\false}{\mathsf{false}}

\newcommand{\CTLS}{\ensuremath{\mathsf{CTL}^*}\xspace}



\def\int{\mathbb{Z}}







\newcommand{\Math}[1]{\ensuremath{#1}}

\newcommand{\modecal}[1]{{\Math{\mathcal{#1}}}}


\newcommand{\A}{\modecal{A}} 
 
\newcommand{\E}{\modecal{E}}








\def\planeaux!#1:#2<-#3!{\Math{#1 \mbox{\rm:} #2\; \leftarrow #3}}

\def\planeaux!#1<-#2!{\Math{#1 \leftarrow #2}}





















\long\def\eatpar#1{%
\ifx#1\par                      
\let\nextmove=\eatpar           
\else
\let\nextmove=#1
\fi
\nextmove
}

\def\qed{\hfill{\qedboxempty}      
  \ifdim\lastskip<\medskipamount \removelastskip\penalty55\medskip\fi}

\def\qedboxempty{\vbox{\hrule\hbox{\vrule\kern3pt
                 \vbox{\kern3pt\kern3pt}\kern3pt\vrule}\hrule}}

\def\qedfull{\hfill{\qedboxfull}   
  \ifdim\lastskip<\medskipamount \removelastskip\penalty55\medskip\fi}

\def\qedboxfull{\vrule height 4pt width 4pt depth 0pt}



%



\newcounter{bean}

\newenvironment{tightenumerate}{
                \begin{list}{
                  {\mbox {
                      \arabic{bean}.\/}}}{\usecounter{bean}
                      \setlength{\itemsep}{-1pt}\setlength{\topsep}{0pt}}}{
                \end{list}}

\newenvironment{tightitemize}{
                \begin{list}{$\bullet$}{
                    \setlength{\itemsep}{-1pt}}{\setlength{\topsep}{0pt}}}{
                \end{list}}





\newcommand{\under}[1]{\mbox{\underline{\it\smash{#1}\vphantom{\lower.05ex\hbox{
x}}}}}


\newcommand{\commentarea}[1]{}








%


\newcommand{\limp}{\supset}


\usepackage{hyperref}

\allowdisplaybreaks

\def\ag{\ensuremath{\mathsf {ag}}\xspace}
\def\env{\ensuremath{\mathsf {env}}\xspace}

\def\AP{{\sf{Var}}}

\def\sat{\vartriangleright}

\newcommand{\DFA}{\text{{DFA}}\xspace}

\newcommand{\DPW}{\text{{DPW}}\xspace}

\newcommand{\LTLf}{\text{{LTLf}}\xspace}
\newcommand{\LTL}{\text{{LTL}}\xspace}
\newcommand{\LDLf}{\text{{LDLf}}\xspace}
\newcommand{\LDL}{\text{{LDL}}\xspace}

\newcommand{\SF}{\ensuremath{\mathfrak{L}}\xspace} 
\newcommand{\LT}{\SF}

\newcommand{\SFf}{\ensuremath{\mathfrak{Lf}}\xspace} 
\newcommand{\LTf}{\SFf}

\usepackage{changebar}

\begin{document}
\title{Planning and Synthesis Under Assumptions}
\author{
Benjamin Aminof\\
  JKU Linz and TU Wien\\
  Austria\\
  aminof@forsyte.at
\And
  Giuseppe De Giacomo \\
  Sapienza Univ.\ Roma \\
  Rome, Italy \\
  degiacomo@dis.uniroma1.it
\And
  Aniello Murano\\
  Univ.\ Federico II \\
  Naples, Italy \\
  murano@unina.it
\And
  Sasha Rubin \\
  Univ.\ Federico II \\
  Naples, Italy \\
  rubin@unina.it
}

\maketitle

\begin{abstract}
  In Reasoning about Action and Planning, one synthesizes the agent
  plan by taking advantage of the assumption on how the environment
  works (that is, one exploits the environment's effects, its
  fairness, its trajectory constraints).  In this paper we study this
  form of synthesis in detail. We consider assumptions as constraints
  on the possible strategies that the environment can have in order to
  respond to the agent's actions.  Such constraints may be given in the
  form of a planning domain (or action theory), as linear-time
  formulas over infinite or finite runs, or as a combination of the
  two. We argue though that not all
  assumption specifications are meaningful: they need to be
  consistent, which means that there must exist an environment
  strategy fulfilling the assumption in spite of the agent
  actions. For such assumptions, we study how to do synthesis/planning
  for agent goals, ranging from a classical reachability to goal on
  traces specified in \LTL and \LTLf/\LDLf, characterizing the
  problem both mathematically and algorithmically.
\end{abstract}


\section{Introduction}

Reasoning about actions and planning concern the representation of a dynamic
system. This representation consists of a description of the interaction
between an agent and its environment and aims at enabling reasoning and
deliberation on the possible course of action for the agent \cite{Reit01}.
Planning in fully observable nondeterministic domains (FOND), say in Planning Domain Definition Language (PDDL),
\cite{GhNT04,GeBo13} exemplifies the standard methodology for expressing
dynamic systems: it represents the world using finitely many \emph{fluents}
under the control of the \emph{environment} and a finitely many \emph{actions}
under the control of the \emph{agent}. Using these two elements a model of the
dynamics of world is given. Agent goals, e.g., reachability objectives, or,
say, temporally extended objectives written in \LTL
\cite{BacchusK00,CTMBM17,DR-IJCAI18}, are expressed over such models in terms of
such fluents and actions.

An important observation is that, in devising plans, the agent takes
advantage of such a representation of the world. Such a representation
corresponds to knowledge that the agent has of the world. In other
words, the agent \emph{assumes} that the world works in a certain
way, and \emph{exploits such an assumption in devising its plans}.
A question immediately comes to mind: 
 \begin{quote}
 \emph{Which kinds of environment assumptions can the agent make?}
 \end{quote}
Obviously the planning domain itself (including the initial state) with its
preconditions and effects is such an assumption. That is, as long as the agent 
sticks to its preconditions, the environment acts as described by the domain. 
So, the agent can exploit the effect of its actions in order to reach a certain goal 
{(state of affairs)}.
{Another common assumption is to assume the domain is \emph{fair}, i.e.,
so-called \emph{fair FOND}~\cite{DaTV99,PistoreT01,Cimatti03,CTMBM17,DIppolitoRS18}.  In this case
the agent can exploit not only the effects, but also the guarantee that by
continuing to execute an action {from a given state} the environment will
eventually respond with all its possible nondeterministic effects.}\footnote{There are two notions of fairness in planning. One stems from the fact that nondeterminism is resolved stochastically. The other is a logical notion analogous to that used in the formal-methods literature. These two notions coincide in the context reachability goals~\cite{DIppolitoRS18}, but diverge with more general LTL goals~\cite{Pnueli:STOC83,Pnueli:IC93}. In this paper, we focus on the logical notion.}
More recently \cite{DBLP:conf/ijcai/BonetG15,DBLP:conf/ijcai/BonetGGR17} 
trajectory constraints over the domain, expressed in \LTL, have been proposed
to model general restrictions on the possible environment behavior.
But is any kind of \LTL formula on the fluents and actions of the domain a
possible trajectory constraint for the environment?
The answer is obviously not! To see this, consider a formula expressing that
eventually a certain possible action must actually be performed (the agent may
decide not to do it). But then 
\begin{quote}
\emph{Which trajectory constraints are suitable as assumptions in a given
domain?} 
\end{quote}
Focusing on \LTL, the question can be rephrased as: 
\begin{quote}
\emph{Can any {linear-time} specification be used as an assumption for the environment?}
\end{quote}
We can summarize these questions, ultimately, by asking:
\begin{quote}
\emph{What is an environment assumption?}
\end{quote}

This is what we investigate in this paper. We take the view that environment
assumptions are ways to talk about \emph{the set of strategies the environment can enact}. Moreover, 
the plan for the goal, i.e., the agent strategy for fulfilling
the goal, need only fulfill the goal against the strategies of the
environment {from the given set of environment strategies}. 
We formalize this insight and {\emph{define}} synthesis/planning under assumptions and
the relationship between the two in a general linear-time setting. {In particular, our definitions
only allow linear-time properties to be assumptions if the environment can enforce them.}
In doing this \emph{we answer the above questions}.

We also concretize the study and express goals and assumptions in \LTL, automata over infinite  words (deterministic parity word automata)~\cite{ALG02}, as well as formalisms over finite traces, i.e., \LTLf/\LDLf~\cite{DegVa13,DR-IJCAI18} and finite word automata. 
This allows us to study \emph{how to solve} synthesis/planning under assumptions problems. One may
think that the natural way to solve such synthesis problems is to have 
the agent synthesize a strategy for the implication 
\[ Assumption \limp Goal\] 
where both $Assumption$ and $Goals$ are expressed, say, in \LTL. 
A first problem with such an implication is that the agent should not devise
strategies that make $Assumption$ false, because in this case the agent would
lose its model of the world without \emph{necessarily} fulfilling its $Goal$.  
{This undesirable situation is avoided by our very notion of environment assumption.}
{A second issue is this:}
\begin{quote}
\emph{Does synthesis/planning under assumptions amount to
synthesizing for the above implication?} 
\end{quote}
{We show that this is not the case. 
Note that an agent that synthesizes for the implication is too pessimistic: the agent, having chosen a candidate agent strategy, considers as possible all environment strategies that satisfy $Assumption$ against the specific candidate strategy it is analyzing. But, in this way the agent gives too much power to the environment, since, in fact, the environment does not know the agent's chosen strategy. On the other hand, surprisingly, we show that if there is an agent strategy fulfilling $Goal$ under $Assumption$, then also there exists one that indeed enforces the implication.
}
Thus, even if the implication 
\emph{cannot be used for characterizing the problem of
synthesis/planning under assumptions}, it \emph{can be used to solve it}.
Exploiting this result, we give techniques to solve synthesis/planning under
assumptions, and study the worst case complexity of the problems when goals and assumptions are
expressed {in the logics and automata mentioned above.}


\section{Synthesis and Linear-time specifications} \label{sec:prelims}

\emph{Synthesis} is the problem of producing a module that satisfies a given property no matter how the environment behaves ~\cite{PnueliR89}.
Synthesis can be thought of in the terminology of games. Let $\AP$ be a finite set of Boolean variables (also called atoms), and assume it is partitioned into two sets: $A$, those controllable by the agent, and $E$, those controllable by the environment. 
Let $\A = 2^A$ be the set of \emph{actions} and $\E = 2^E$ the set of \emph{environment states} (note the symmetry: we 
think of $\A$ as a set of actions that are compactly
represented as assignments of the variables in $A$). The game consists of infinitely many phases. 
In each phase of the game, both players assign values to their variables, with the environment going first. These assignments 
are given by \emph{strategies}: an agent strategy $\sigma_\ag:\E^+ \to \A$ and an environment strategy $\sigma_\env:\A^* \to \E$. The resulting infinite sequence of assignments is denoted $\pi_{\sigma_\ag,\sigma_\env}$.\footnote{Formally, 
we say that $\pi = \pi_0 \pi_1 \cdots$ complies with $\sigma_\ag$ if $\sigma_\ag((\pi_0 \cap E) \cdots (\pi_k \cap E)) = \pi_k \cap A$ for all $k$; and we say that $\pi$ complies with $\sigma_\env$ if $\sigma_\env((\pi_0 \cap A) \cdots (\pi_k \cap A)) = \pi_{k+1} \cap E$ for all $k$. Then $\pi_{\sigma_\ag,\sigma_\env}$ is defined to be the unique infinite trace that complies with both strategies.} 

{In classic synthesis the agent is trying to ensure that the produced sequence satisfies a given linear-time property. In what follows \emph{we write \SF to denote a generic formalism for defining linear-time properties.} Thus, the reader may substitute their favorite formalism for \SF, e.g., one can take \SF to be linear temporal logic, or deterministic parity automata. We use logical notation throughout. For instance, when $\phi$ refers to a logical formula, then $\phi_1 \wedge \phi_2$ refers to conjunction of formulas, but when $\phi$ refers to an automaton then $\phi_1 \wedge \phi_2$ refers to intersection of automata. If $\phi \in \LT$ write $[[\phi]] \in (2^\AP)^\omega$ for the set it defines. For instance, if $\phi \in \LTL$ then $[[\phi]]$ is the set of infinite sequences that satisfy $\phi$, but when $\phi$ is an automaton operating on infinite sequences, then $[[\phi]]$ is the set of infinite sequences accepted by the automaton. Moreover, in both cases we say that the sequence \emph{satisfies} $\phi$.
}

We say that \emph{$\sigma_\ag$ realizes $\phi$ (written $\sigma_\ag \sat \phi$)} if $\forall \sigma_\env. \pi_{\sigma_\ag,\sigma_\env} \in [[\phi]]$, i.e., if no matter which strategy the environment uses, the resulting sequence satisfies $\phi$. Similarly, we say that \emph{$\sigma_\env$ realizes $\phi$ (written $\sigma_\env \sat \phi$)} if $\forall \sigma_\ag. \pi_{\sigma_\ag,\sigma_\env} \in [[\phi]]$. We write $Str_\env(\phi)$ (resp. $Str_\ag(\phi)$) for the set of environment (resp. agent) strategies that realize $\phi$, and in case this set is non-empty we say that \emph{$\phi$ is environment (resp. agent) realizable}. We write $Str_\env$ (resp. $Str_\ag$) for the set of all environment (resp. agent) strategies.

\emph{Solving $\SF$ environment- (resp. agent-) synthesis} asks, given $\phi \in \SF$ to decide if $\phi$ is environment- (resp. agent-) 
realizable, and to return such a finite-state strategy (if one exists). 
{In other words, realizability is the {recognition problem} associated to synthesis.}
We now recall two concrete specification formalisms $\SF$, namely \LTL (linear temporal logic) and \DPW (deterministic parity word automata), and then state that results about solving \LTL/\DPW synthesis.

\subsubsection{Linear temporal Logic (LTL)} \label{sec:prelims:LTL}

	 \emph{Formulas of $\LTL(\AP)$}, or simply \emph{$\LTL$}, are generated by the following grammar:
	\[\varphi \!::=\! p \!\mid\! \varphi \vee \varphi \!\mid\! \neg \varphi \!\mid\!  \nextX \! \varphi \!\mid \! \varphi \until \varphi\]
	where $p \in \AP$. The \emph{size $|\varphi|$} of a formula $\varphi$ is the number of symbols in it.
	\LTL formulas are interpreted over infinite sequences $\pi \in (2^{\AP})^\omega$. Define the satisfaction relation
	$\models$ as follows:
	\begin{enumerate*}
	\item[] $(\pi,n) \models p$ iff $p \in \pi_n$;
	\item[]  $(\pi,n) \models \varphi_1 \vee \varphi_2$ iff $(\pi,n) \models \varphi_i$ for some $i \in \{1,2\}$;
	\item[] 	$(\pi,n) \models \neg \varphi$ iff it is not the case that $(\pi,n) \models \varphi$;
	\item[]   $(\pi,n) \models \nextX \varphi$ iff $(\pi,n+1) \models \varphi$;
	\item[]  $(\pi,n) \models \varphi_1 \until \varphi_2$ iff there exists $i \geq n$ such that $(\pi,i) \models \varphi_2$ and for all $i \leq j < n$, $(\pi,j) \models \varphi_1$.
	\end{enumerate*}
	Write $\pi \models \varphi$ if $(\pi,0) \models \varphi$ and say that $\pi$ \emph{satisfies} $\varphi$ and $\pi$ is a \emph{model} of $\varphi$. 
	An \LTL formula $\varphi$ \emph{defines} the set $[[\pi]] \doteq \{\pi \in (2^\AP)^\omega: \pi \models \varphi\}$.
	We use the usual abbreviations, $\varphi \limp \varphi' \doteq \neg \varphi \vee \varphi'$, $\true := p \vee \neg p$, $\false \doteq \neg \true$, 
	$\eventually \varphi \doteq \true \until \varphi$,
	$\always \varphi \doteq \neg \eventually \neg \varphi$.  
	Write $Bool(\AP)$ for the set of Boolean formulas over $\AP$. We remark that every result in this paper that mentions \LTL also holds for \LDL (linear dynamic logic)~\cite{Var11,Eisner:2006tu}. 
	
\subsubsection{Deterministic Parity Word Automata (DPW)}

A \emph{\DPW} over $\AP$ is a tuple $M = (Q,q_{in},T,col)$ where $Q$ is a finite set of \emph{states}, 
$q_{in} \in Q$ is an \emph{initial state}, $T:Q \times 2^{\AP} \to Q$ is the \emph{transition function}, and $col:Q \to \mathbb{Z}$ 
is the \emph{coloring}. The \emph{run} $\rho$  of $M$ on the \emph{input} word $x_0 x_1 x_2 \cdots \in (2^{\AP})^\omega$ 
is the infinite sequence of transitions $(q_0,x_0,q_1) (q_1,x_1,q_2) (q_2,x_2,q_3) \cdots$ such that $q_0 = q_{in}$. 
A run is \emph{successful} if the largest color occurring infinitely often is even. 
In this case, we say that the input word is \emph{accepted}. The \DPW $M$ 
\emph{defines} the set $[[M]]$ consisting of all input words it accepts. The \emph{size of $M$}, written $|M|$, is the cardinality of $Q$.
The \emph{number of colors} of $M$ is the cardinality of $col(Q)$. 

{\DPW{s} are effectively closed under Boolean operations, see e.g., \cite{ALG02}:
\begin{lemma} \label{lem:DPW Boolean combinations} 
Let $M_i$ be \DPW with $n_i$ states and $c_i$ colors, respectively.
\begin{enumerate} 
 \item One can effectively form a \DPW with $n_1$ states and $c_1$ colors for the complement of $M_1$.
 \item One can effectively form a \DPW with with $O(n_1 n_2 d^2d!)$ many states and $O(d)$ many colors, where $d = c_1 + c_2$, for the disjunction $M_1 \lor M_2$.
\end{enumerate}
\end{lemma}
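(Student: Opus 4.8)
The plan is to treat the two items separately: complementation is essentially free for deterministic automata, whereas disjunction requires an appearance-record construction to restore the parity shape of the acceptance condition.

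For item~1, I would keep the states, the initial state and the transition function of $M_1$ untouched and only re-colour, setting $col'(q) = col(q)+1$. Determinism means every input word has a unique run, and the largest colour occurring infinitely often along it shifts up by exactly one; hence it is even under $col'$ precisely when it was odd under $col$, i.e.\ precisely when $M_1$ rejected the word. Since $c \mapsto c+1$ is a bijection on colours, the state count and the colour count are unchanged. This step is a one-liner with nothing delicate in it.

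For item~2, I would first form the synchronous product $M_1 \times M_2$: state set $Q_1 \times Q_2$, initial state the pair of initial states, componentwise transitions. This has $n_1 n_2$ states, and on each input word its unique run records the unique pair of runs of $M_1$ and $M_2$. The word should be accepted iff the first component satisfies its parity condition \emph{or} the second does, so the acceptance condition of the product is a \emph{disjunction of two parity conditions}; after relabelling the second coloring by an even shift so that, taken together, the two components use at most $d = c_1 + c_2$ distinct colours, this condition lives over $d$ colours. The key observation is that a max-parity condition on $c$ colours is a Rabin condition with $\lceil c/2 \rceil$ pairs (one pair per even colour, requiring that colour to occur i.o.\ and everything larger not to), and that a disjunction of Rabin conditions is again a Rabin condition, obtained by pooling the pairs; hence our condition is a Rabin condition with $O(d)$ pairs over the $n_1 n_2$-state product. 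I would then invoke the standard index-appearance-record transformation, which turns any deterministic Rabin automaton with $s$ states and $k$ pairs into an equivalent deterministic parity automaton whose memory is a permutation of the $k$ pairs together with an $O(k)$-valued pointer — a multiplicative blow-up of $O(k^2 k!)$ in states and $O(k)$ colours. With $s = n_1 n_2$ and $k = O(d)$ this yields $O(n_1 n_2\, d^2 d!)$ states and $O(d)$ colours. Correctness is immediate: the product is language-equivalent to $[[M_1]] \cup [[M_2]]$, and the appearance-record transformation is acceptance-equivalent run by run.

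The routine ingredients are the product construction, the colour-disjointness relabelling, and the arithmetic on the state and colour counts. The one genuinely substantive ingredient — which I would cite rather than reprove, e.g.\ from \cite{ALG02} — is the index-appearance-record transformation (equivalently, a latest-appearance-record on colours) from Rabin/Muller acceptance to parity acceptance on a deterministic automaton, together with its precise size bounds; pinning down the $d^2 d!$ factor and the $O(d)$-colour bound is the only place where care is actually needed.
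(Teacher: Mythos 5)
Your proposal is correct and follows exactly the standard route the paper itself relies on: the paper gives no proof of this lemma, citing \cite{ALG02}, and your color-shift argument for complementation together with the product construction plus the index-appearance-record (Rabin-to-parity) transformation for disjunction is precisely the textbook construction that yields the stated $O(n_1 n_2\, d^2 d!)$-state, $O(d)$-color bounds.
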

Thus, e.g., from \DPW $M_1,M_2$ one can build a DPW for $M_1 \limp M_2$ whose number of states is $O(n_1 n_2 d^2d!)$ and whose number of colors is $O(d)$.

Every \LTL formula $\varphi$ can be translated into an equivalent \DPW $M$, i.e., $[[\varphi]] = [[M]]$, see e.g. \cite{DBLP:conf/banff/Vardi95,DBLP:journals/lmcs/Piterman07}.
Moreover, the cost of this translation and the size of $M$ are at most doubly exponential in the size of $\varphi$, 
and the number of colors of $M$ is at most singly exponential in the size of $\varphi$.

Here is a summary of the complexity of solving synthesis:
\begin{theorem}[Solving Synthesis] \label{fact:synthesis} \hspace{0cm}
\begin{enumerate} 
\item Solving \LTL environment (resp. agent) synthesis is $2$\exptime-complete~\cite{PnueliR89}.
\item Solving \DPW environment (resp. agent) synthesis is \ptime 
in the size of the automaton and $\exptime$ in the number of its colors~\cite{PnueliR89,finkbeiner2016synthesis}. 
\end{enumerate}
\end{theorem}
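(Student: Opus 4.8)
The plan is to prove part~(2) first and derive the membership half of part~(1) from it; the hardness half of part~(1) is a separate, more delicate reduction.

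\emph{Part (2): reducing \DPW synthesis to parity games.} Given a \DPW $M=(Q,q_{in},T,col)$ I would build the game arena $G_M$ with vertex set $Q \uplus (Q\times\E)$, the vertices in $Q$ owned by the environment and those in $Q\times\E$ owned by the agent: from $q$ there is an edge to $(q,e)$ for each $e\in\E$ (the environment picking its next state), and from $(q,e)$ an edge to $T(q,e\cup a)$ for each $a\in\A$ (the agent picking its action); the initial vertex is $q_{in}$, the color of $q$ is $col(q)$, and every auxiliary vertex $(q,e)$ gets a color dominated by all others (e.g.\ $\min col(Q)$), so the parity of a play is decided by which $Q$-vertices it visits infinitely often. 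Plays of $G_M$ from $q_{in}$ are then in bijection with sequences $\pi\in(2^\AP)^\omega$ -- respecting the ``environment moves first'' order of our games -- and a play is won by the agent exactly when the associated $\pi$ is accepted by $M$. Hence $M$ is agent-realizable iff the agent wins $G_M$ from $q_{in}$, and $M$ is environment-realizable iff the environment does; by memoryless determinacy of parity games (see \cite{ALG02}) exactly one of the two holds, and the winner has a \emph{memoryless} winning strategy, which unravels to a finite-state synthesis strategy with at most $|Q|$ memory states (keep the current $Q$-vertex, update it by $T$). To conclude I would invoke the classical recursive algorithm for parity games (\cite{ALG02,finkbeiner2016synthesis}), which on an arena with $n$ vertices and $k$ colors decides the winner and computes memoryless winning strategies for both players in time $n^{O(k)}$ -- polynomial in $n$ for each fixed $k$, and exponential in $k$. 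Since $G_M$ has $O(|Q|\cdot|\E|)$ vertices (polynomial in any explicit description of $M$, whose transition table already has $|Q|\cdot 2^{|\AP|}$ entries) and as many colors as $M$, this yields part~(2) for the agent and the environment variant in one run of the algorithm (which returns both winning regions).

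\emph{Part (1), membership.} Given $\varphi\in\LTL$ I would translate it to an equivalent \DPW $M_\varphi$ via the doubly-exponential construction recalled above (\cite{DBLP:conf/banff/Vardi95,DBLP:journals/lmcs/Piterman07}), so $|M_\varphi|\le 2^{2^{O(|\varphi|)}}$ and $M_\varphi$ has $2^{O(|\varphi|)}$ colors, and then apply part~(2). The running time is polynomial in $2^{2^{O(|\varphi|)}}$ and exponential in $2^{O(|\varphi|)}$, and both are $2^{2^{O(|\varphi|)}}$; hence \LTL synthesis, agent or environment, is in $2$\exptime, and it returns a finite-state strategy of at most doubly exponential size.

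\emph{Part (1), hardness, and the main obstacle.} The remaining, genuinely hard direction is $2$\exptime-hardness of \LTL realizability (\cite{PnueliR89}). I would reduce from the acceptance problem for alternating Turing machines with an exponential space bound -- a problem complete for $2$\exptime, since alternating exponential space coincides with $2$\exptime. From such a machine $N$ and input $w$ I would build, \emph{in polynomial time}, an \LTL formula $\varphi_{N,w}$ that is agent-realizable iff $N$ accepts $w$: a play encodes one branch of a computation tree of $N$ as a stream of configuration blocks, each block holding $2^{p(|w|)}$ tape symbols addressed by a $p(|w|)$-bit binary counter that the players maintain; the agent resolves $N$'s existential choices and the environment its universal ones; and $\varphi_{N,w}$ states, in size polynomial in $|N|+|w|$, that whenever the environment's moves form a legal sequence of universal choices, the agent's moves form a correctly initialized accepting run, where consistency of consecutive configurations under $N$'s transitions is enforced cell-by-cell by comparing each cell with the cell at the same counter value one block later (using $\nextX$ and $\always$). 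The environment variant follows by the symmetric reduction (swap which player simulates the existential branches), or by a structural dualization together with \DPW complementation. The main obstacle is exactly this encoding: forcing a polynomial-size \LTL formula to faithfully maintain an exponentially long, counter-addressed tape and to correctly interleave the two players' nondeterministic choices, and then checking both implications of the reduction. By contrast, the upper bounds above are a routine composition of the supplied automata translation with the parity-game fact, the only fine point being to measure complexity against the explicit description of $M$ rather than against $|Q|$ alone, so as to absorb the $2^{|\AP|}$ factor hidden in the transition function.
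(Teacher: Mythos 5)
The paper does not prove this theorem at all: it is stated as a summary of known results and discharged by citation to Pnueli--Rosner and to the parity-game literature, so your reconstruction can only be measured against the standard arguments those citations stand for. Your upper-bound route (DPW synthesis as a parity game on an arena alternating environment and agent moves, then LTL via the doubly-exponential DPW translation) is exactly the standard one, and your remark about measuring the DPW's size via its explicit transition table is the right way to read ``\ptime in the size of the automaton''. However, two steps as written do not work. First, you claim ``$M$ is environment-realizable iff the environment wins $G_M$'' and that ``exactly one of the two holds''. In $G_M$ the agent's objective is acceptance, so the environment winning $G_M$ means the environment forces \emph{rejection}, i.e., it realizes $\neg M$, not $M$; determinacy gives ``agent realizes $\phi$ or environment realizes $\neg\phi$'' (this is precisely the duality the paper later uses in Theorem~\ref{thm:det}), and it is easy to have properties realizable by neither player (e.g., $x_0\wedge y_0$ with $x$ agent-controlled and $y$ environment-controlled). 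So one run of the algorithm does not settle both variants; environment synthesis requires solving the game on the same arena with the environment as protagonist for $M$'s parity condition (same complexity, but a different game).

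Second, in the $2$\exptime-hardness sketch, the mechanism you describe --- enforcing consistency of consecutive configurations ``cell-by-cell by comparing each cell with the cell at the same counter value one block later (using $\nextX$ and $\always$)'' --- is not expressible by a polynomial-size \LTL formula: the two cells are exponentially far apart and \LTL cannot bind and compare two unboundedly distant positions; writing the offset with nested $\nextX$'s costs $2^{p(|w|)}$ operators. The classical Pnueli--Rosner argument gets around this by exploiting the interaction itself: the adversary announces (or challenges) a single address, which is then carried along the play so that the equality of addresses and the local transition check become position-local constraints, and the formula only has to police the challenged cell. You flag the encoding as the main obstacle, which is fair, but the specific device you propose for it is the step that fails; without the challenge mechanism (or an equivalent trick) the reduction does not go through at polynomial formula size.
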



\section{Synthesis under Assumptions}

In this section we give core definitions of environment assumptions and synthesis under such assumptions.
Intuitively, the assumptions are used to select the environment strategies that the agent considers possible, i.e., although the agent does not know 
the particular environment strategy it will encounter, it knows that it comes from such a set.
We begin in the abstract, and then move to declarative specifications. Unless explicitly specified, we assume fixed sets $E$ and $A$ of environment and agent atoms.

{Here are the main definitions of this paper:}
\begin{definition}[Environment Assumptions -- abstract] \label{def:abstract:assumptions}
We call any {non-empty} set $\Omega \subseteq Str_\env$ of environment strategies an \emph{environment assumption}.
\end{definition}

Informally, the set $\Omega$ represents the set of environment strategies that the agent considers possible.
\begin{definition}[Agent Goals -- abstract] 
We call any set $\Gamma$ of traces an \emph{agent goal}. 
\end{definition}

\begin{definition}[Synthesis under assumptions -- abstract] \label{def:abstract:sua}
Let $\Omega$ be an environment assumption and $\Gamma$ an agent goal. 
We say that an agent strategy $\sigma_\ag$ \emph{realizes $\Gamma$ assuming $\Omega$} if 
\[\forall \sigma_\env \in \Omega. \, \pi_{\sigma_\ag,\sigma_\env} \in \Gamma\]  
\end{definition}

\begin{remark}[On the non-emptiness of $\Omega$]
Note that the requirement that $\Omega$ be non-empty is a consistency requirement; if it were empty then there would be no $\pi_{\sigma_\ag,\sigma_\env}$ to test for membership in $\Gamma$ and so synthesis under assumptions would trivialize and all agent strategies would realize all goals. 
\end{remark}

For the rest of this paper we will specify agent goals and environment assumptions as linear-time properties. \textbf{In particular, we assume that $\SF$ is a formalism for specifying linear-time properties over $\AP$, e.g., $\SF = \LTL$ or $\SF = \DPW$.  }

How should $\omega \in \SF$ determine an assumption $\Omega$? In general, $\omega$ talks about the interaction between the agent and the environment.
However, we want that the agent can be guaranteed that whatever it does the resulting play satisfies $\omega$. Thus, a given $\omega$ induces the set $\Omega$ consisting of {all} environment strategies $\sigma_\env$ such that for all agent strategies $\sigma_\ag$ the resulting trace satisfies $\omega$. In particular, for $\Omega$ to be non-empty (as required for it to be an environment assumption) we must have that $\omega$ is environment realizable. This justifies the following definitions.

\begin{definition}[Synthesis under Assumptions -- linear-time] \label{dfn:LT:assumptions} \hspace{0cm}
\begin{enumerate}
 \item  We call $\omega \in \SF$ an \emph{environment assumption} if it is environment realizable.
 \item We call any $\gamma \in \SF$ an \emph{agent goal}.
 \item An \emph{$\SF$ synthesis under assumptions problem} is a tuple $P = (E,A,\omega,\gamma)$ where $\omega \in \SF$ is an environment assumption and $\gamma \in \SF$ is an agent goal.
 \item We say that an agent strategy $\sigma_\ag$ \emph{realizes $\gamma$ assuming $\omega$}, or that it \emph{solves} $P$, if 
 $\forall \sigma_\env \sat \omega. \, \pi_{\sigma_\ag,\sigma_\env} \models \gamma$. 
 \item The corresponding decision problem is to decide, given $P$, if there is an agent strategy solving $P$.
\end{enumerate}
\end{definition}
For instance, \emph{solving \LTL synthesis under assumptions} means, given $P = (E,A,\omega,\gamma)$ with environment assumption 
$\omega \in \LTL(E \cup A)$ and agent goal $\gamma \in \LTL(E \cup A)$, to decide if there is an agent strategy solving $P$, and to return such a finite-state strategy (if one exists).
We remark that solving \LTL synthesis under assumptions is not immediate; we will provide algorithms in the next section. For now, we point out that 
\emph{deciding whether $\omega$ is an environment assumption amounts to checking if $\omega$ is environment realizable}, itself a problem that can be solved by {known results (i.e., Theorem~\ref{fact:synthesis}).}
\begin{theorem}
\begin{enumerate}
 \item Deciding if an $\LTL$ formula is an environment assumption is $2$\exptime-complete. 
 \item Deciding if a $\DPW$ is an environment assumption is in \ptime in the size of the \DPW and exponential in its number of colors.
\end{enumerate}
\end{theorem}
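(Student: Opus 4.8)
The key observation is Definition~\ref{dfn:LT:assumptions}(1): a specification $\omega \in \SF$ is an environment assumption \emph{by definition} exactly when it is environment realizable, i.e.\ when $Str_\env(\omega) \neq \emptyset$, equivalently when there is some $\sigma_\env \sat \omega$. Hence the question ``is $\omega$ an environment assumption?'' is literally the recognition problem associated with $\SF$ environment synthesis, and both items are obtained by instantiating $\SF$ in Theorem~\ref{fact:synthesis}.

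For the upper bounds: instantiating $\SF = \LTL$ in Theorem~\ref{fact:synthesis}(1) puts the problem in $2$\exptime, and instantiating $\SF = \DPW$ in Theorem~\ref{fact:synthesis}(2) puts it in \ptime in the number of states of the automaton and \exptime in its number of colors. As a consistency check, item~1 also follows from item~2 by first translating $\varphi$ into an equivalent \DPW of size at most doubly exponential in $|\varphi|$ with at most singly exponentially many colors (as recalled above): \ptime in a doubly-exponential object and \exptime in a singly-exponential parameter again yields $2$\exptime. We will present the direct route via Theorem~\ref{fact:synthesis}(1) as the main argument.

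For the matching lower bound in item~1: $2$\exptime-hardness is already the completeness part of Theorem~\ref{fact:synthesis}(1), which asserts that \LTL environment synthesis is $2$\exptime-complete for precisely the environment-synthesis notion defined in this paper. Since the problem ``is $\omega$ an environment assumption?'' \emph{coincides} with \LTL environment realizability (the identity map serves as the reduction), the hardness transfers verbatim, with no loss. No hardness claim is made in item~2, so nothing further is needed there.

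The only point that needs care — rather than a genuine obstacle — is checking that the hardness in Theorem~\ref{fact:synthesis}(1) is stated for the same player and the same move-order convention (environment moves first in each round) used here, so that the lower bound is immediate and does not require a bespoke construction. If one wished to be fully self-contained one could instead reduce agent realizability (the textbook $2$\exptime-hard problem) to environment realizability by prefixing a single fresh round that swaps the first mover; but given how Theorem~\ref{fact:synthesis} is phrased this is unnecessary. Everything else is a one-line instantiation.
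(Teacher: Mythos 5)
Your proposal is correct and follows exactly the paper's own route: the paper likewise observes that, by definition, being an environment assumption is the same as being environment realizable, and then simply instantiates Theorem~\ref{fact:synthesis} for \LTL and \DPW respectively (including inheriting the $2$\exptime{} lower bound for the \LTL case from the stated completeness of \LTL environment synthesis). Your extra remarks (the indirect \DPW route and the move-order caveat) are harmless additions but not needed.
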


We illustrate such notions with some examples.
\begin{example} \label{ex:asmp}
\begin{enumerate}
   \item The set $\Omega = Str_\env$, definable in \LTL by the formula $\omega \doteq \true$, is an environment assumption. It captures the situation that the agent assumes that the environment will use any of the strategies in $Str_\env$.
   
   \item In robot-action planning problems, typical environment assumptions encode the physical space, e.g., ``if robot is in Room 1 and does action $Move$ then in the next step it can only be in Rooms 1 or 4''. 
   The set $\Omega$ of environment strategies that realize these properties is an environment assumption, 
   definable in \LTL by a conjunction of formulas of the form $\always((R_1 \wedge Move) \limp \nextX (R_1 \vee R_4))$. 
   We will generalize this example by showing that the set of environment strategies in a planning domain $D$ can be viewed as an environment assumption definable in \LTL. 
   \end{enumerate}
\end{example}


\section{Solving Synthesis under Assumptions}

In this section we show how to solve synthesis under assumptions when the environment assumptions and agent goals are given in \LTL or by \DPW.
{The general idea is to reduce synthesis under assumptions to ordinary synthesis, i.e., synthesis of the implication $\omega \limp \gamma$. 
Although correct, understanding why it is correct is not immediate.}

\begin{lemma}
Let $\omega \in \LT$ be an environment assumption and $\gamma \in \LT$ an agent goal. Then, 
every agent strategy that realizes $\omega \limp \gamma$ also realizes $\gamma$ assuming $\omega$. 
\end{lemma}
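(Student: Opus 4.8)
The plan is to unwind the definitions and observe that the key point is a containment of strategy sets. Suppose $\sigma_\ag$ realizes $\omega \limp \gamma$; by definition this means that for every environment strategy $\sigma_\env$, the play $\pi_{\sigma_\ag,\sigma_\env}$ satisfies $\omega \limp \gamma$, i.e., if $\pi_{\sigma_\ag,\sigma_\env} \models \omega$ then $\pi_{\sigma_\ag,\sigma_\env} \models \gamma$. We must show $\sigma_\ag$ realizes $\gamma$ assuming $\omega$, i.e., that for every $\sigma_\env$ with $\sigma_\env \sat \omega$ we have $\pi_{\sigma_\ag,\sigma_\env} \models \gamma$.

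So fix an arbitrary $\sigma_\env$ with $\sigma_\env \sat \omega$. By the definition of $\sigma_\env \sat \omega$ (environment realizes $\omega$), we have $\forall \sigma'_\ag.\ \pi_{\sigma'_\ag,\sigma_\env} \models \omega$; instantiating the universally quantified agent strategy with our fixed $\sigma_\ag$ gives $\pi_{\sigma_\ag,\sigma_\env} \models \omega$. Now apply the hypothesis that $\sigma_\ag$ realizes $\omega \limp \gamma$, instantiated at this same $\sigma_\env$: since $\pi_{\sigma_\ag,\sigma_\env} \models \omega$, and $\pi_{\sigma_\ag,\sigma_\env} \models \omega \limp \gamma$, modus ponens yields $\pi_{\sigma_\ag,\sigma_\env} \models \gamma$. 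As $\sigma_\env$ was an arbitrary environment strategy realizing $\omega$, this establishes $\forall \sigma_\env \sat \omega.\ \pi_{\sigma_\ag,\sigma_\env} \models \gamma$, which is exactly the claim that $\sigma_\ag$ realizes $\gamma$ assuming $\omega$.

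There is essentially no obstacle here: the direction stated in the lemma is the ``easy'' inclusion, and the only subtlety worth flagging is the two different quantifier structures being juggled --- ``$\sigma_\ag$ realizes $\omega\limp\gamma$'' quantifies over all $\sigma_\env$, while ``$\sigma_\env \sat \omega$'' quantifies over all $\sigma_\ag$ --- and the proof works precisely because we get to instantiate each of these universal quantifiers at the particular strategy in hand. (The genuinely hard and surprising direction, announced in the introduction, is the converse: that solvability of synthesis under assumptions implies realizability of the implication; that will require a nontrivial argument and presumably the strategy-transformation idea hinted at there. But that is not what this lemma asks.) I would keep the write-up to the three lines of the displayed chain above.
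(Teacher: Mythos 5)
Your proof is correct and is essentially identical to the paper's own argument: fix $\sigma_\env \sat \omega$, instantiate both universal quantifiers at the play $\pi_{\sigma_\ag,\sigma_\env}$, and apply modus ponens to $\pi \models \omega$ and $\pi \models \omega \limp \gamma$. Your remark about which quantifier gets instantiated where is a fair gloss, but nothing beyond the paper's three-line proof is needed.
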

\begin{proof}
Let $\sigma_\ag$ be an agent strategy realizing $\omega \limp \gamma$ (a). To show that $\sigma_\ag$ realizes $\gamma$ assuming $\omega$ let 
$\sigma_\env$ be an environment strategy realizing $\omega$ (b). Now consider the trace $\pi = \pi_{\sigma_\ag,\sigma_\env}$. We must show that $\pi$ satisfies $\gamma$. By (a) $\pi$ satisfies $\omega \limp \gamma$ and by (b) $\pi$ satisfies $\omega$. 
\end{proof}

We now observe that the converse is not true. Consider $A \doteq \{x\}$ and $E \doteq  \{y\}$, and let $\omega \doteq y \limp x$ and $\gamma \doteq y \limp \neg x$. First note that $\omega$ is an environment assumption formula (indeed, the environment can realize $\omega$ by playing $\neg y$ at the first step). Moreover, every environment strategy realizing $\omega$ begins by playing $\neg y$ (since otherwise the agent could play $\neg x$ on its first turn and falsify $\omega$). Thus, every agent strategy realizes $\gamma$ assuming $\omega$ (since the environment's first move is to play $\neg y$ which makes $\gamma$ true no matter what the agent does). On the other hand, not every agent strategy realizes $\omega \limp \gamma$ 
(indeed, the strategy which plays $x$ on its first turn fails to satisfy the implication on the trace in which the environment plays $y$ on its first turn).
In spite of the failure of the converse, the realizability problems are inter-reducible:{\footnote{For all reasonable expressions $\omega$, e.g., that define Borel sets~\cite{Mar75}.}}
\begin{theorem} \label{thm:det}
Suppose $\omega \in \LT$ is an environment assumption.
 The following are equivalent:
 \begin{enumerate}
  \item There is an agent strategy realizing $\omega \limp \gamma$.
  \item There is an agent strategy realizing $\gamma$ assuming $\omega$.
 \end{enumerate}
\end{theorem}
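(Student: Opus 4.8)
The plan is to prove the two implications separately. Direction $(1) \Rightarrow (2)$ is exactly the preceding Lemma, so no new work is needed there. For $(2) \Rightarrow (1)$ I would argue by contraposition: assuming that \emph{no} agent strategy realizes $\omega \limp \gamma$, I would exhibit a single environment strategy $\sigma_\env^*$ that simultaneously realizes $\omega$ and defeats \emph{every} agent strategy with respect to $\gamma$, thereby showing that no agent strategy realizes $\gamma$ assuming $\omega$.

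The engine of the argument is determinacy of the synthesis game. The set of plays satisfying $\omega \limp \gamma$ is Borel whenever $\omega$ and $\gamma$ are --- and it is $\omega$-regular when they are given in \LTL, \LDL, or as \DPW --- and the synthesis game is a two-player turn-based game over $(2^\AP)^\omega$ (the environment moving first in each phase, the agent second), hence it is determined: by Borel determinacy~\cite{Mar75} in general, and by classical results on $\omega$-regular games (positional determinacy of parity games) in the $\omega$-regular case, which moreover yields a \emph{finite-state} witness. Consequently, if no agent strategy realizes $\omega \limp \gamma$, then the environment has a strategy $\sigma_\env^*$ such that for every agent strategy $\sigma_\ag$ the play $\pi_{\sigma_\ag,\sigma_\env^*}$ fails $\omega \limp \gamma$, i.e.\ $\pi_{\sigma_\ag,\sigma_\env^*} \models \omega$ and $\pi_{\sigma_\ag,\sigma_\env^*} \models \neg\gamma$ for every $\sigma_\ag$.

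From here the conclusion follows by unwinding the definitions. Since $\pi_{\sigma_\ag,\sigma_\env^*} \models \omega$ holds for \emph{every} agent strategy $\sigma_\ag$, by definition $\sigma_\env^* \sat \omega$; in particular the set of environment strategies realizing $\omega$ is non-empty (consistent with $\omega$ being an environment assumption), and $\sigma_\env^*$ is one of them. Now fix an arbitrary agent strategy $\sigma_\ag$: we have $\sigma_\env^* \sat \omega$ yet $\pi_{\sigma_\ag,\sigma_\env^*} \models \neg\gamma$, so $\sigma_\ag$ does \emph{not} realize $\gamma$ assuming $\omega$, with $\sigma_\env^*$ itself serving as the witnessing counter-strategy. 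As $\sigma_\ag$ was arbitrary, no agent strategy realizes $\gamma$ assuming $\omega$, which is precisely the desired contrapositive.

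The only delicate point is the appeal to determinacy, and this is exactly what forces the ``reasonableness'' hypothesis flagged in the footnote: one needs the $\omega \limp \gamma$ game to be determined, so that failure of agent-realizability hands the whole board to the environment. For the concrete formalisms studied here this is unproblematic, since $\omega \limp \gamma$ is then $\omega$-regular, determinacy is classical, and the witnessing environment strategy $\sigma_\env^*$ can be taken to be finite-state --- which is all that is needed. Everything else in the proof is bookkeeping with the definitions of $\sat$ and of ``realizes $\gamma$ assuming $\omega$''.
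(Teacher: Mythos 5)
Your proof is correct and follows essentially the same route as the paper: the forward direction is the preceding lemma, and the converse is obtained via (Borel) determinacy, which turns failure of agent-realizability of $\omega \limp \gamma$ into an environment strategy realizing $\omega \wedge \neg\gamma$ that then witnesses $\omega$ and defeats every agent strategy on $\gamma$. The only cosmetic difference is that you phrase the converse as a contraposition while the paper finishes with a contradiction; the substance is identical.
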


\begin{proof}
The previous lemma gives us $1 \rightarrow 2$. For the converse, suppose $1$ does not hold, i.e., $\omega \limp \gamma$ is not agent-realizable. 
Now, an immediate consequence of Martin's Borel Determinacy Theorem~\cite{Mar75} is that for every $\phi$ in any reasonable specification formalism (including all the ones mentioned in this paper), $\phi$ is not agent realizable iff $\neg \phi$ is environment realizable. 
Thus, $\neg (\omega \limp \gamma)$ is environment-realizable, i.e., 
$\exists \sigma_\env \forall \sigma_\ag. \pi_{\sigma_\ag,\sigma_\env} \models \omega \wedge \neg \gamma$. Note in particular that $\sigma_\env$ realizes $\omega$, i.e., $\sigma_\env \sat \omega$. Now, suppose for a contradiction that $2$ holds, and take $\sigma_\ag$ realizing $\gamma$ assuming $\omega$. Then by definition of realizability under assumptions and using the fact that $\sigma_\env \sat \omega$ we have that $\pi_{\sigma_\ag,\sigma_\env} \models \gamma$. On the other hand, we have already seen that $\pi_{\sigma_\ag,\sigma_\env} \models \neg \gamma$, a contradiction.
\end{proof}

Moreover, we see that one can actually extract a strategy solving 
synthesis by assumptions simply by extracting a strategy for solving
the implication $\omega \limp \gamma$, which itself can be done by
known results, i.e., for \LTL use Theorem~\ref{fact:synthesis} (part 1), and for \DPW use Lemma~\ref{lem:DPW Boolean combinations} and Theorem~\ref{fact:synthesis} (part 2).
\begin{theorem}\label{thm:solving:SUA}
\begin{enumerate} 
\item Solving \LTL synthesis under assumptions is $2$\exptime-complete.
\item Solving \DPW synthesis under assumptions is in \ptime in the size of the automata and in \exptime in the number of colors of the automata.
\end{enumerate}
\end{theorem}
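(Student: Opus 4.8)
The plan is, in both cases (\LTL and \DPW), to reduce $\SF$ synthesis under assumptions to ordinary $\SF$ agent synthesis of the implication $\omega \limp \gamma$, and then read off the complexity from Theorem~\ref{fact:synthesis} (and, for \DPW, from the closure bounds of Lemma~\ref{lem:DPW Boolean combinations}). Concretely, given $P = (E,A,\omega,\gamma)$ one first checks that $\omega$ is an environment assumption, i.e., that $\omega$ is environment realizable, using Theorem~\ref{fact:synthesis}; if it is not, $P$ is not a well-formed instance. Otherwise one runs ordinary agent synthesis for $\omega \limp \gamma$: by Theorem~\ref{thm:det} this has a solution exactly when $P$ does, and by the first lemma of this section any finite-state agent strategy produced for $\omega \limp \gamma$ already realizes $\gamma$ assuming $\omega$, so it can be returned directly as a solution to $P$. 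Thus both the decision and the strategy-extraction aspects of ``solving'' are handled by this reduction.

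For part~1, the \LTL upper bound is immediate: $\omega \limp \gamma$ is an \LTL formula of size $O(|\omega|+|\gamma|)$, so agent synthesis for it is in $2$\exptime by Theorem~\ref{fact:synthesis}(1), and the preliminary environment-realizability check for $\omega$ is likewise in $2$\exptime, which does not change the bound. For the matching lower bound one reduces from ordinary \LTL agent synthesis: given $\gamma \in \LTL$, put $\omega \doteq \true$, which is an environment assumption (Example~\ref{ex:asmp}(1)); since every environment strategy realizes $\true$, an agent strategy solves $(E,A,\true,\gamma)$ iff it realizes $\gamma$ in the ordinary sense, and the latter problem is $2$\exptime-hard by Theorem~\ref{fact:synthesis}(1). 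Hence \LTL synthesis under assumptions is $2$\exptime-complete.

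For part~2, suppose $\omega$ is given by a \DPW with $n_1$ states and $c_1$ colors and $\gamma$ by a \DPW with $n_2$ states and $c_2$ colors. By Lemma~\ref{lem:DPW Boolean combinations}(1) the complement of $\omega$ has $n_1$ states and $c_1$ colors, and then by Lemma~\ref{lem:DPW Boolean combinations}(2) one obtains a \DPW for $\neg\omega \lor \gamma = \omega \limp \gamma$ with $O(n_1 n_2 d^2 d!)$ states and $O(d)$ colors, where $d = c_1 + c_2$. Running \DPW agent synthesis on this automaton (Theorem~\ref{fact:synthesis}(2)) takes time polynomial in $n_1 n_2 d^2 d!$ and exponential in $O(d)$; since $d^2 d!$ is itself only exponential in $d$, this is polynomial in the sizes $n_1,n_2$ of the input automata and exponential in their total number of colors. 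The preliminary check that $\omega$ is an environment assumption costs the same, so solving \DPW synthesis under assumptions is in \ptime in the size of the automata and in \exptime in their number of colors.

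All the real content is carried by Theorem~\ref{thm:det}; what remains is bookkeeping, and the only delicate point is making sure, in part~2, that the superpolynomial factor $d^2 d!$ introduced by the disjunction of \DPW{s} is charged to the ``exponential in the number of colors'' clause rather than the ``polynomial in the size'' clause. I do not anticipate any other obstacle.
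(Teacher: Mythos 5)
Your proposal is correct and follows essentially the same route as the paper: both reduce synthesis under assumptions to ordinary agent synthesis of $\omega \limp \gamma$ via Theorem~\ref{thm:det} (with the preceding lemma guaranteeing the extracted strategy is a valid solution), then invoke Theorem~\ref{fact:synthesis} for \LTL and Lemma~\ref{lem:DPW Boolean combinations} together with Theorem~\ref{fact:synthesis} for \DPW, with the lower bound obtained by specializing to $\omega \doteq \true$. Your explicit accounting of the $d^2 d!$ blow-up being absorbed into the exponential-in-colors term is just a spelled-out version of the bookkeeping the paper leaves implicit.
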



\section{Planning under Assumptions} \label{sec:Planning Under Assumptions}
In this section we define planning under assumptions,
{that is synthesis wrt a domain\footnote{Domains can be thought of as compact representations of the arenas in games on graphs~\cite{ALG02}. The player chooses actions, also represented compactly, and the environment resolves the nondeterminism. In addition, not every action needs to be available in every vertex of the arena.}.}
We begin with a representation of fully-observable non-deterministic (FOND) domains~\cite{GhNT04,GeBo13}. Our representation considers actions symmetrically to fluents, i.e., as assignments to certain variables.

A \emph{domain} $D = (E,A,I,Pre,\Delta)$ consists of:
\begin{itemize} 
 \item a non-empty set $E$ of \emph{environment} Boolean variables, also called \emph{fluents}; 
 the elements of $\E = 2^E$ are called \emph{environment states},
 \item a non-empty set $A$ (disjoint from $E$) of \emph{action} Boolean variables; the elements of  $\A = 2^A$ are called \emph{actions}, 
 \item a non-empty set $I \subseteq \E$ of \emph{initial environment states},
 \item a relation $Pre \subseteq \E \times \A$ of \emph{available actions} such that for every $s \in \E$ there is an $a \in \A$ with $(s,a) \in Pre$ (we say that $a$ is \emph{available} in $s$), and 
 \item a relation $\Delta \subseteq \E \times \A \times \E$ such that $(s,a,t) \in \Delta$ implies that $(s,a) \in Pre$.
\end{itemize}

As is customary in planning and reasoning about actions, we assume domains are represented compactly by tuples $(E,A,init,pre,\delta)$ where 
$init \in Bool(E)$, 
$pre \in Bool(E \cup A)$, 
and $\delta \in Bool(E \cup A \cup E')$ (here $E' \doteq \{e' : e \in E\}$). This data induces the domain $(E,A,I,Pre,\Delta)$ where 
\begin{enumerate}
 \item $s \in I$ iff $s \models init$,
 \item $(s,a) \in Pre$ iff $s \cup a \models pre$, 
 \item $(s,a,t) \in \Delta$ iff $s \cup a \cup \{e' : e \in t\} \models \delta$.
\end{enumerate}

We emphasize that when measuring the size of $D$ we use this compact representation:
\begin{definition} 
The \emph{size of $D$}, written $|D|$, is $|E| + |A| + |init| + |pre| + |\delta|$. 
\end{definition}

We remark that in PDDL action preconditions are declared using $\texttt{:precondition}$, conditional effects using the $\texttt{when}$ operator, and nondeterministic outcomes using the $\texttt{oneof}$ operator (note that we code actions with action variables).

\begin{example}[Universal Domain]
Given $E$ and $A$ define the \emph{universal} domain $U = (E,A,I,Pre,\Delta)$ where $I \doteq \E$, $Pre \doteq \E \times \A$ and $\Delta \doteq \E \times \A \times \E$. 
\end{example}

We now define the set of environment strategies induced by a domain. We do this by describing a property $\omega_D$, that itself can be 
represented in \LTL and \DPW, as shown below.

\begin{definition} \label{dfn:domain-env} 
Fix a domain $D$.
Define a property $\omega_D$ (over atoms $E \cup A$) as consisting of all traces $\pi = \pi_0 \pi_1 \ldots$ such that 
\begin{enumerate} 
\item $\pi_0 \in I$ and 
\item for all $n \geq 1$, if $\pi_i \cap A$ is available in $\pi_i \cap E$ for every $i \in [0,n-1]$ then $(\pi_{n-1} \cap E, \pi_{n-1} \cap A, \pi_{n} \cap E) \in \Delta$.
\end{enumerate}

\end{definition}
Observe that $\omega_D$ is an environment assumption since, by the definition of domain, whenever an action is available in a state there is at least one possible successor state. 
Intuitively, an environment strategy $\sigma_\env:\A^* \to \E$ is in $Str_\env(\omega_D)$ if i) its first move is to pick an initial environment state, and ii) thereafter, if the current action $a$ is available in the current environment state $x$ (and the same holds in all earlier steps) then the next environment state $y \in \E$ is constrained so that $(x,a,y) \in \Delta$. Notice that $\sigma_\env$ is unconstrained the moment $a$ is not available in $x$, e.g., in PDDL these would be actions for which the preconditions are not satisfied. Intuitively, this means that it is in the interest of the agent to play available actions because otherwise the agent can't rely {on the fact that the trace comes from the domain.}

\begin{remark}
The reader may be wondering why the above definition does not say i') $\pi_0 \in I$ and ii') for all $n \geq 1$, $(\pi_{n-1} \cap E, \pi_{n-1} \cap A, \pi_{n} \cap E) \in \Delta$. Consider the linear-time property $\omega'_D$ consisting of traces $\pi$ satisfying i' and ii'. Observe that, in general, $\omega'_D$ is not environment realizable. Indeed, condition ii' implies that $\pi_n \cap A$ is available in $\pi_n \cap E$. However, no environment strategy can force the agent to play an available action.
\end{remark}

We now observe that one can express $\omega_D$ in \LTL.

\begin{lemma} \label{lem:omegaD:LTL}
For every domain $D$ there is an \LTL formula equivalent to $\omega_D$. 
Furthermore, the size of the \LTL formula is linear in the size of $D$. 
\end{lemma}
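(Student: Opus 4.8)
The plan is to write down the LTL formula explicitly and then check that it defines $\omega_D$. First I would convert the transition formula $\delta \in Bool(E \cup A \cup E')$ into an LTL formula $\hat\delta$ over $E \cup A$ by syntactically replacing each primed atom $e' \in E'$ by $\nextX e$. Since $E$, $A$, and $E'$ are pairwise disjoint, a routine induction on the structure of $\delta$ (which, being Boolean, uses only $\neg$, $\vee$, and atoms) shows that for every trace $\pi$ and position $m$, $(\pi,m) \models \hat\delta$ iff $(\pi_m \cap E,\ \pi_m \cap A,\ \pi_{m+1} \cap E) \in \Delta$; the nesting depth of $\nextX$ in $\hat\delta$ is one, so $\hat\delta$ is a legitimate LTL formula. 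The formulas $init \in Bool(E)$ and $pre \in Bool(E \cup A)$ are already LTL formulas, with $(\pi,0) \models init$ iff $\pi_0 \cap E \in I$, and $(\pi,m) \models pre$ iff $\pi_m \cap A$ is available in $\pi_m \cap E$.

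Then I would propose
\[
\varphi_D \;\doteq\; init \;\wedge\; \neg\bigl(pre \until (pre \wedge \neg\hat\delta)\bigr),
\]
and argue $[[\varphi_D]] = \omega_D$. The conjunct $init$ is exactly condition~1 of Definition~\ref{dfn:domain-env}. For the second conjunct, unfolding the semantics of $\until$ at position $0$ shows that $(\pi,0) \models pre \until (pre \wedge \neg\hat\delta)$ holds iff there exists $m \geq 0$ with $pre$ holding at every position in $[0,m]$ and $\hat\delta$ failing at $m$; after re-indexing with $m = n-1$ this is precisely the statement that condition~2 of Definition~\ref{dfn:domain-env} fails. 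Hence $\neg(pre \until (pre \wedge \neg\hat\delta))$ captures condition~2, and the conjunction captures $\omega_D$.

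For the size bound, $\hat\delta$ has size at most $2|\delta|$ (each primed atom contributes one extra $\nextX$), $pre$ appears twice, and $init$ once, so $|\varphi_D| = O(|init| + |pre| + |\delta|) = O(|D|)$, which is linear in $|D|$.

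The one place where care is needed --- and where a naive attempt fails --- is the \emph{cascading} guard of condition~2: it would be wrong to write $\always(pre \limp \hat\delta)$, since that forces $\hat\delta$ at a step whose precondition currently holds even when some earlier action was unavailable. The operator $\until$ is exactly what is needed to express that $\hat\delta$ must hold as long as every precondition up to now has held, so pinning down this encoding is the main (and essentially the only) obstacle; everything else is the structural induction for $\hat\delta$ and routine symbol counting.
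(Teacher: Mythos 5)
Your proposal is correct and follows essentially the same route as the paper: substitute each primed atom $e'$ by $\nextX e$ to obtain $\hat\delta$, then capture the cascading precondition guard of Definition~\ref{dfn:domain-env} with an until-based formula of linear size. Your formula $init \wedge \neg\bigl(pre \until (pre \wedge \neg\hat\delta)\bigr)$ is logically equivalent to the paper's $init \wedge (\always \delta' \vee \delta' \until \neg pre)$, and your observation that $\always(pre \limp \hat\delta)$ would be wrong is exactly the subtlety both encodings are designed to handle.
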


To see this, say domain $D = (E,A,I,Pre,\Delta)$ is represented compactly by $(E,A,init,pre,\delta)$. For the \LTL formula, let $\delta'$ be the $\LTL(E \cup A)$ formula formed from the formula $\delta \in Bool(E \cup A \cup E')$ by replacing every term of the form $e'$ by $\nextX e$. Note that $(\pi,n) \models \delta'$ iff $(\pi_n \cap E, \pi_n \cap A, \pi_{n+1} \cap E) \in \Delta$. 
The promised $\LTL(E \cup A)$ formula is 
\[ init \wedge (\always \delta' \vee \delta' \until \neg pre).\]

{One can also express $\omega_D$ directly by a \DPW.}
\begin{lemma} \label{lem:omegaD:DPW}
For every domain $D$ there is a \DPW $M_D$ equivalent to $\omega_D$. 
Furthermore, the size of the \DPW is at most exponential in the size of $D$ and has two colors.
\end{lemma}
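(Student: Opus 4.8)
The plan is to directly construct the \DPW $M_D$ from the compact representation $(E,A,init,pre,\delta)$ of $D$, essentially by simulating the \LTL formula $init \wedge (\always \delta' \vee \delta' \until \neg pre)$ of Lemma~\ref{lem:omegaD:LTL}, but exploiting the fact that this formula is really a safety-like property with a one-step lookahead, so it can be recognized by a small deterministic automaton rather than going through the doubly-exponential \LTL-to-\DPW translation. The key observation is that to check membership of $\pi = \pi_0\pi_1\cdots$ in $\omega_D$ the automaton only needs to remember (i) the last environment-state/action pair read, so that upon reading the next environment state it can verify the $\delta$-constraint, and (ii) a single bit recording whether some earlier action was unavailable in its state (once this happens, all future obligations are dropped), plus an initial "haven't read anything yet" state used to check $\pi_0 \in I$.

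Concretely, I would take the state set to be $Q = \{q_{in}\} \cup (\E \times \A) \cup \{q_{free}, q_{rej}\}$, where $q_{in}$ is the initial state, a state $(x,a)$ means "the last environment state read was $x$, the last action read was $a$, and no precondition violation has yet occurred", $q_{free}$ is an accepting sink entered once a precondition is violated, and $q_{rej}$ is a rejecting sink entered once a $\delta$-violation is detected. The transition function $T: Q \times 2^{E\cup A} \to Q$ is defined by: from $q_{in}$ on input $s \cup a$ (with $s \in \E$, $a \in \A$), go to $q_{rej}$ if $s \not\models init$, go to $q_{free}$ if $(s,a)\notin Pre$, and otherwise go to $(s,a)$; from $(x,a)$ on input $t \cup b$, go to $q_{rej}$ if $(x,a,t)\notin\Delta$, else go to $q_{free}$ if $(t,b)\notin Pre$, else go to $(t,b)$; and $q_{free}, q_{rej}$ are sinks. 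Colour $q_{rej}$ with an odd number and everything else with a larger even number (two colours suffice). I then verify by unwinding the definitions that a run on $\pi$ visits $q_{rej}$ iff $\pi \notin \omega_D$ — i.e. iff either $\pi_0 \notin I$, or there is some $n$ with $\pi_i \cap A$ available in $\pi_i \cap E$ for all $i<n$ but $(\pi_{n-1}\cap E, \pi_{n-1}\cap A, \pi_n \cap E)\notin\Delta$ — matching Definition~\ref{dfn:domain-env} exactly, and that otherwise the run gets stuck in $q_{free}$ or stays forever in $\E \times \A$, in both cases seeing the even colour cofinitely; hence $[[M_D]] = \omega_D$. Finally, $|Q| = 2^{|E|} \cdot 2^{|A|} + 3$, which is at most exponential in $|D| = |E|+|A|+|init|+|pre|+|\delta|$, and the automaton uses two colours, giving the stated bounds.

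The main obstacle is not conceptual but bookkeeping: getting the off-by-one indices in Definition~\ref{dfn:domain-env} right in the transition function (the condition quantifies availability over $i \in [0,n-1]$ and checks $\Delta$ on the pair $(\pi_{n-1},\pi_n)$, so the automaton must check the $\delta$-constraint on the transition that reads $\pi_n$, using the pair $(x,a)$ it stored from reading $\pi_{n-1}$), and making sure the "once unavailable, always free" semantics is faithfully captured by the $q_{free}$ sink — in particular that an unavailable action at step $0$ still forces acceptance (which it does, since from $q_{in}$ a precondition violation routes to $q_{free}$ provided $init$ holds; one must double-check whether $\omega_D$ requires $\pi_0 \in I$ even when $\pi_0\cap A$ is unavailable, and the definition says it does, so the $init$ check must be performed before the precondition check, exactly as above). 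None of this requires Lemma~\ref{lem:DPW Boolean combinations} or the \LTL-to-\DPW machinery; the construction is elementary and self-contained.
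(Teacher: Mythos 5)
Your construction is correct and is essentially the paper's own proof: the paper also builds the \DPW directly (avoiding the doubly-exponential \LTL-to-\DPW translation), with states $\{q_{in},q_{+},q_{-}\} \cup (\E \times \A)$ playing exactly the roles of your $q_{in}$, $q_{free}$, $q_{rej}$ and $(x,a)$, the same ordering of the $I$/$\Delta$/$Pre$ checks, and a two-color parity condition marking only the rejecting sink odd. The size and color bounds you give match the paper's.
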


To do this we define the \DPW directly rather than translate the \LTL formula (which would give a double exponential bound).  
Define the \DPW $M_D \doteq (Q,q_{in},T,col)$ over $E \cup A$ as follows. 
Introduce fresh symbols $q_{in}, q_{+},q_{-}$. Let $q_{in}$ be the initial state. Define $Q \doteq \{q_{in},q_{+},q_{-}\} \cup (\E \times \A)$.
Define $col(q_{-}) = 1$, and $col(q) = 0$ for all $q \neq q_{-}$.
For all $e,e' \in \E, a,a' \in \A$ the transitions are given in Table~\ref{tab:trans}. Intuitively, on reading the input $e' \cup a'$ the \DPW goes to the rejecting sink $q_{-}$ if $\Delta$ (resp. $I$) is not respected, 
it goes to the accepting sink $q_{+}$ if $\Delta$ (resp. $I$) is respected but $Pre$ is not, 
and otherwise it continues (and accepts).
  \begin{table}[h!]
 \begin{tabular}{llll}
$q_{in}$ & $\xrightarrow{e' \cup a'}$& $q_{-}$   & if $e' \not \in I$\\
$q_{in}$ & $\xrightarrow{e' \cup a'}$& $(e',a')$ & if $e' \in I$ and $(e',a') \in Pre$ \\
$q_{in}$ & $\xrightarrow{e' \cup a'}$& $q_{+}$   & if $e' \in I$ and $(e',a') \not \in Pre$\\
$(e,a)$  & $\xrightarrow{e' \cup a'}$& $q_{-}$   & if $(e,a,e') \not \in \Delta$\\
$(e,a)$  & $\xrightarrow{e' \cup a'}$& $(e',a')$ & if $(e,a,e') \in \Delta$ and $(e',a') \in Pre$\\
$(e,a)$  & $\xrightarrow{e' \cup a'}$& $q_{+}$   & if $(e,a,e') \in \Delta$ and $(e',a') \not \in Pre$\\
$q_{-}$  & $\xrightarrow{e' \cup a'}$& $ q_{-}$  &\\
$q_{+}$  & $\xrightarrow{e' \cup a'}$& $ q_{+}$  &\\
\end{tabular}
  \caption{Transitions for \DPW for $\omega_D$}
  \label{tab:trans}
\end{table}

\begin{definition}\label{dfn:planning:assumption}
Let $D$ be a domain. 
\begin{itemize} \item A set $\Omega \subseteq Str_\env$ is an \emph{environment assumption for the domain $D$} if $Str_\env(\omega_D) \cap \Omega$ is non-empty. 
 \item $\omega \in \SF$ is an \emph{environment assumption for the domain $D$} if $Str_\env(\omega_D) \cap Str_\env(\omega)$ is non-empty, i.e.,  if $Str_\env(\omega)$ is an environment assumption for the domain $D$.
\end{itemize}

\end{definition}

{We illustrate the notion with some examples.}
\begin{example} \label{ex} \hspace{0cm}
\begin{enumerate} 
 
 \item $\omega \doteq \true$ is an environment assumption for $D$ since $\omega_D \wedge \omega \equiv \omega_D$ is environment realizable.
 
 \item Let $\omega_{D,fair}$ denote the following property: $\pi \in \omega_{D,fair}$ iff for all $(s,a) \in Pre$, if there are infinitely many $n$ such that $s = \pi_n \cap E$ and $a = \pi_n \cap A$, then for every $t \in \E$ with $(s,a,t) \in \Delta$ there are infinitely many $n$ such that $s = \pi_n \cap E, a = \pi_n \cap A$ and $t = \pi_{n+1} \cap E$. In words, this says that if a state-action pair occurs infinitely often, then infinitely often this is followed by every possible effect.
 
 Note that $\omega_{D,fair}$ is an environment assumption for domain $D$ since, e.g., the strategy that
 resolves the effects in a round-robin way realizes $\omega_D \wedge \omega_{D,fair}$. Note that $\omega_{D,fair}$ is definable in \LTL by a formula of size exponential in $D$: 
 \[
\bigwedge_{s \in \E} \bigwedge_{a \in \A} (\always \eventually (s \wedge a)  \limp \bigwedge_{s': (s,a,s') \in \Delta} \always \eventually (s \wedge a \wedge \nextX s')). 
\]

\item In planning, trajectory constraints, e.g., expressed in LTL,  have been introduced for expressing temporally extended goals ~\cite{BacchusK00,DBLP:journals/ai/GereviniHLSD09}.
More recently, especially in the context of generalized planning, they have been used to describe restrictions on the environment as well \cite{DBLP:conf/ijcai/BonetG15,DeGiacomoMRS16,DBLP:conf/ijcai/BonetGGR17}. 
However, not all trajectory constraints $\omega$ can be used as assumptions.  In fact, Definition~\ref{dfn:planning:assumption}, which says that a formula $\omega$ is an environment assumption for the domain $D$ if $\omega_D \wedge \omega$ is environment realizable, characterizes those formulas that can serve as trajectory constraints.
\end{enumerate}
\end{example}

We can check if $\omega \in \LTL$ is an environment assumption for $D$ by converting it to a DPW $M_\omega$, converting $D$ into the \DPW $M_D$ (as above), and then checking if the \DPW $M_D \wedge M_\omega$ is environment realizable. Hence we have:
\begin{theorem} 
\begin{enumerate}
 \item Deciding if an $\LTL$ formula $\omega$ is an environment assumption for the domain $D$ is $2$\exptime-complete. Moreover, it can be solved in 
 \exptime in the size of $D$ and $2$\exptime in the size of $\omega$. 
 \item Deciding if a \DPW $\omega$ is an environment assumption for the domain $D$ is in \exptime. Moreover, it can be solved in \exptime in the size 
 of $D$ and \ptime in the size of $\omega$ and \exptime in the number of colors of $\omega$.
\end{enumerate}
\end{theorem}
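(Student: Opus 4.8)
The plan is to carry out the reduction sketched immediately above the statement and then do the complexity bookkeeping. First unfold Definition~\ref{dfn:planning:assumption}: $\omega$ is an environment assumption for $D$ iff $Str_\env(\omega_D) \cap Str_\env(\omega) \neq \emptyset$, and since a strategy $\sigma_\env$ lies in this intersection exactly when $\sigma_\env \sat \omega_D$ and $\sigma_\env \sat \omega$, i.e.\ when $\sigma_\env \sat (\omega_D \wedge \omega)$, the property to decide is precisely environment realizability of the linear-time property $\omega_D \wedge \omega$. So the whole theorem amounts to bounding the cost of this realizability test, which I would do by producing a single \DPW for $\omega_D \wedge \omega$ and invoking Theorem~\ref{fact:synthesis}(2).

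For the upper bounds: by Lemma~\ref{lem:omegaD:DPW} there is a \DPW $M_D$ for $\omega_D$ with $2^{O(|D|)}$ states and two colors. In case~1, translate $\omega \in \LTL$ into an equivalent \DPW $M_\omega$ with $2^{2^{O(|\omega|)}}$ states and $2^{O(|\omega|)}$ colors (the doubly-exponential \LTL-to-\DPW construction recalled in the preliminaries); in case~2 simply take $M_\omega = \omega$, say with $n_\omega$ states and $c_\omega$ colors. Now form $M_D \wedge M_\omega$ via Lemma~\ref{lem:DPW Boolean combinations} (complement both, disjoin, complement again), obtaining $O(n_D n_\omega d^2 d!)$ states and $O(d)$ colors where $d$ is the sum of the two color counts, and feed this automaton to Theorem~\ref{fact:synthesis}(2) (\ptime in states, \exptime in the number of colors). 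In case~1, $d = 2^{O(|\omega|)}$, so the automaton has $2^{O(|D|)} \cdot 2^{2^{O(|\omega|)}}$ states and $2^{O(|\omega|)}$ colors, giving a procedure that is \exptime in $|D|$ and $2$\exptime in $|\omega|$, hence $2$\exptime overall. In case~2, the blow-up $d^2 d!$ depends only on $c_\omega$ (not on $|D|$ or on $n_\omega$), so the state count is polynomial in $n_\omega$, polynomial in $2^{O(|D|)}$, and exponential in $c_\omega$, while the number of colors is $O(c_\omega)$; the resulting procedure is \exptime in $|D|$, \ptime in $|\omega|$ and \exptime in the number of colors of $\omega$, hence \exptime overall. (For just the $2$\exptime bound in case~1 one could instead use Lemma~\ref{lem:omegaD:LTL} to write $\omega_D \wedge \omega$ as an \LTL formula of size $O(|D| + |\omega|)$ and apply Theorem~\ref{fact:synthesis}(1), but that does not separate the dependence on $D$ from the dependence on $\omega$, so the \DPW route is needed for the refined statement.)

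For the matching lower bound in case~1, I would reduce from \LTL environment realizability, which is $2$\exptime-hard by Theorem~\ref{fact:synthesis}(1). Given $\psi \in \LTL(E \cup A)$, take $D$ to be the universal domain $U$ over $E,A$, whose compact representation has size $O(|E| + |A|)$; in $U$ every action is available in every state and every triple belongs to $\Delta$, so $\omega_U$ is the set of all traces, i.e.\ $\omega_U \equiv \true$. Hence $\omega_U \wedge \psi \equiv \psi$, and $\psi$ is an environment assumption for $U$ iff $\psi$ is environment realizable; this is a polynomial reduction, yielding $2$\exptime-hardness (already in the size of $\omega$), and together with the upper bound it gives $2$\exptime-completeness. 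No lower bound is required for case~2, where only membership in \exptime is asserted. I do not expect a genuine obstacle here: everything reduces to the earlier lemmas, and the only delicate point is the size/color accounting of the intersection step — in particular verifying that in case~2 the factorial factor is charged solely to the colors of $\omega$ (which is exactly what makes the ``\ptime in $|\omega|$, \exptime in the number of colors'' clause go through) and that the \LTL-to-\DPW size/color estimates are propagated correctly through the \ptime/\exptime bound of Theorem~\ref{fact:synthesis}(2).
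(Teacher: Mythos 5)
Your proposal is correct and follows essentially the same route as the paper: convert $D$ to the \DPW $M_D$ (Lemma~\ref{lem:omegaD:DPW}), convert $\omega$ to a \DPW if it is given in \LTL, test environment realizability of $M_D \wedge M_\omega$ via Lemma~\ref{lem:DPW Boolean combinations} and Theorem~\ref{fact:synthesis}(2), and obtain the $2$\exptime lower bound by taking the universal domain so that the problem collapses to \LTL environment realizability. Your write-up merely spells out the size/color bookkeeping that the paper leaves implicit.
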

For the lower bound take $D \doteq U$ to be the universal domain and apply the lower bound from Theorem~\ref{fact:synthesis}.

{Now we turn to planning under assumptions.}
\begin{definition}[Planning under Assumptions -- abstract] \label{def:PUA}\hspace{0cm}
\begin{enumerate}
\item A \emph{planning under assumptions problem $P$} is a tuple $((D,\Omega),\Gamma)$ where 
 \begin{itemize} 
  \item $D$ is a domain,
  \item $\Omega \subseteq Str_\env$ is an environment assumption for $D$, and 
  \item $\Gamma$ is an agent goal.
 \end{itemize}
 \item  We say that an agent strategy $\sigma_\ag$ \emph{solves} $P$ if 
 \[\forall \sigma_\env \in Str_\env(\omega_D) \cap \Omega. \, \pi_{\sigma_\ag,\sigma_\env} \in \Gamma\]  
\end{enumerate}
\end{definition}

We can instantiate this definition to environment assumptions and agent goals definable in $\SF$.
\begin{definition}[Planning under Assumptions -- linear-time] \hspace{0em}
\begin{enumerate}
 \item An \emph{$\SF$ planning under assumptions problem} is a tuple $P = ((D,\omega),\gamma)$ where $\omega \in \SF$ is an environment assumption for $D$ and $\gamma \in \SF$ is an agent goal.
 \item We say that an agent strategy $\sigma_\ag$ \emph{realizes $\gamma$ assuming $\omega$}, or that it \emph{solves} $P$, if 
 \[\forall \sigma_\env \sat (\omega_D \wedge \omega). \, \pi_{\sigma_\ag,\sigma_\env} \models \gamma\] 
\end{enumerate}
\end{definition}

The corresponding decision problem asks, given an \LT planning under assumptions problem $P$ to decide whether there is an agent strategy that solves $P$. For instance, \emph{\LTL planning under assumptions} asks, given $P = ((D,\omega),\gamma)$ with $\omega,\gamma \in \LTL$, 
to decide if there is an agent strategy that solves $P$, and to return such a finite-state strategy (if one exists). Similar definitions apply to \DPW planning under assumptions, etc.

{It turns out that {virtually all  forms of planning {(with linear-time temporally extended goals)}} in the literature are special cases of planning under \LTL assumptions, i.e., {the set of strategies that solve a given planning problem are exactly the set of strategies that solve the corresponding planning under assumptions problem.} 
In the following, $\mathit{Goal} \in Bool(E \cup A)$, and 
$\mathit{Exec}$ is the \LTL formula $\always \bigwedge_{a \in A} (a \limp \mathit{pre_a})$ expressing that if an action is done then its precondition holds.}

\begin{example}

 \begin{enumerate}
\item FOND planning with reachability goals~\cite{Rintanen:ICAPS04} corresponds to \LTL planning under assumptions with $\omega \doteq \true$ and {$\gamma \doteq \mathit{Exec} \land \eventually Goal$.}

\item FOND planning with \LTL (temporally extended) goals $\gamma$ 
\cite{BacchusK00,PistoreT01,CTMBM17}.
corresponds to \LTL planning under assumptions with $\omega \doteq \true$ {and goal $\mathit{Exec} \land \gamma$.}

\item FOND planning with \LTL trajectory constraints $\omega$ and \LTL (temporally extended) goals $\gamma$ \cite{DBLP:conf/ijcai/BonetG15,DeGiacomoMRS16,DBLP:conf/ijcai/BonetGGR17} corresponds to \LTL planning under assumptions {with assumptions $\omega$ and goal $\mathit{Exec} \land \gamma$.}

\item {Fair FOND planning} with reachability goals~\cite{DaTV99,GeBo13,DIppolitoRS18} corresponds to planning under assumptions with 
$\omega \doteq \omega_{D,fair}$ and $\gamma \doteq \textit{Exec} \land \eventually Goal$.

\item {Fair FOND planning} with (temporally extended) goals $\gamma$ as defined in~\cite{DBLP:conf/ijcai/PatriziLG13,CTMBM17} corresponds to planning under assumptions with $\omega \doteq \omega_{D,fair}$ and goal $\mathit{Exec} \land \gamma$.

\item Obviously adding \LTL trajectory constraints $\omega_{tc}$ to {fair FOND planning} with (temporally extended) goals  corresponds to planning under assumptions with $\omega \doteq \omega_{D,fair} \wedge \omega_{tc}$ {and goal $\mathit{Exec} \land \gamma$.} 
\end{enumerate}

\end{example}

{We also observe that the Fair FOND planning problems just mentioned can be captured by \LTL planning under assumptions since $\omega_{D,fair}$ can be written in \LTL (see Example~\ref{ex}).


\section{Translating between planning and synthesis}
In this section we ask the question if there is a {fundamental} difference between synthesis and planning in our setting (i.e., assumptions and goals given as linear-time properties). We answer by observing that there are translations between them. The next two results follow immediately from the definitions:

\begin{theorem}[Synthesis to Planning] \label{prop:synthesis to planning}
Let $(E,A,\omega,\gamma)$ be a synthesis under Assumptions problem, and let 
$P = ((U,\omega),\gamma)$ be the corresponding 
Planning under Assumptions problem where $U$ is the universal domain. Then, for every agent strategy $\sigma_\ag$ we have that 
$\sigma_\ag$ solves $P$ iff $\sigma_\ag$ realizes $\gamma$ assuming $\omega$.
\end{theorem}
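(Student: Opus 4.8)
The plan is to unwind the two definitions and observe they coincide verbatim. Recall from Definition~\ref{def:PUA} that $\sigma_\ag$ solves $P = ((U,\omega),\gamma)$ iff $\forall \sigma_\env \sat (\omega_U \wedge \omega). \, \pi_{\sigma_\ag,\sigma_\env} \models \gamma$, while from Definition~\ref{dfn:LT:assumptions} $\sigma_\ag$ realizes $\gamma$ assuming $\omega$ iff $\forall \sigma_\env \sat \omega. \, \pi_{\sigma_\ag,\sigma_\env} \models \gamma$. So it suffices to show that the two sets of environment strategies quantified over are the same, i.e., that $Str_\env(\omega_U \wedge \omega) = Str_\env(\omega)$, which reduces to showing $Str_\env(\omega_U) = Str_\env$, i.e., that \emph{every} environment strategy realizes $\omega_U$.

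The key step, then, is to prove that $\omega_U \equiv \true$ as a linear-time property, or more precisely that $[[\omega_U]] = (2^{E\cup A})^\omega$. This follows directly from the definition of the universal domain $U = (E,A,I,Pre,\Delta)$ with $I = \E$, $Pre = \E \times \A$, and $\Delta = \E \times \A \times \E$, together with Definition~\ref{dfn:domain-env}: for any trace $\pi$, condition (1) holds since $\pi_0 \in \E = I$ trivially, and condition (2) holds since whenever $\pi_i \cap A$ is available in $\pi_i \cap E$ for $i \in [0,n-1]$ (which is always, as $Pre = \E\times\A$), we need $(\pi_{n-1}\cap E, \pi_{n-1}\cap A, \pi_n \cap E) \in \Delta = \E\times\A\times\E$, which again always holds. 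Hence every trace is in $[[\omega_U]]$, so every environment strategy realizes $\omega_U$, giving $Str_\env(\omega_U) = Str_\env$.

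From $Str_\env(\omega_U) = Str_\env$ we get $Str_\env(\omega_U \wedge \omega) = Str_\env(\omega_U) \cap Str_\env(\omega) = Str_\env \cap Str_\env(\omega) = Str_\env(\omega)$ (using that $\sigma_\env$ realizes a conjunction iff it realizes each conjunct, since $\pi_{\sigma_\ag,\sigma_\env} \models \phi_1 \wedge \phi_2$ for all $\sigma_\ag$ iff $\pi_{\sigma_\ag,\sigma_\env}\models\phi_1$ for all $\sigma_\ag$ and $\pi_{\sigma_\ag,\sigma_\env}\models\phi_2$ for all $\sigma_\ag$). Substituting this equality into the two definitions above yields that the universally quantified statements are literally identical, so $\sigma_\ag$ solves $P$ iff $\sigma_\ag$ realizes $\gamma$ assuming $\omega$, as required. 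One should also note in passing that $\omega$ is an environment assumption for $U$: since $\omega_U \wedge \omega \equiv \omega$ and $\omega$ is an environment assumption (hence environment realizable), $\omega_U \wedge \omega$ is environment realizable too, so $P$ is a well-formed planning under assumptions problem.

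I do not expect a genuine obstacle here — the statement is, as the paper says, immediate from the definitions. The only point requiring a line of care is the identification $[[\omega_U]] = (2^{E\cup A})^\omega$; everything else is pure unfolding of quantifiers. If one wanted to be fully rigorous about the "realizes a conjunction" step when $\SF = \DPW$ rather than \LTL, one would appeal to the fact that acceptance by $M_1 \wedge M_2$ (automaton intersection) coincides with acceptance by both $M_1$ and $M_2$, which is standard and already implicitly used via Lemma~\ref{lem:DPW Boolean combinations}.
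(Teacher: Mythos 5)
Your proposal is correct and matches the paper's treatment: the paper simply states that this result ``follows immediately from the definitions,'' and your argument is exactly that unfolding, with the key observation that $[[\omega_U]]$ contains every trace (since $I=\E$, $Pre=\E\times\A$, $\Delta=\E\times\A\times\E$), so $Str_\env(\omega_U\wedge\omega)=Str_\env(\omega)$ and the two quantifications coincide. Your extra check that $\omega$ remains an environment assumption for $U$ is a sensible, if routine, addition.
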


\begin{theorem}[Planning to Synthesis] \label{prop:planning to synthesis}
Let $D = (E,A,I,Pre,\Delta)$ be a domain and let $P = ((D,\omega),\gamma)$ be a Planning under Assumptions problem. 
Let $(E,A,\omega_D \wedge \omega,\gamma)$ be the corresponding Synthesis under Assumptions problem. Then, for every agent strategy $\sigma_\ag$ we have that 
$\sigma_\ag$ solves $P$ iff $\sigma_\ag$ realizes $\gamma$ assuming $\omega_D \wedge \omega$.
\end{theorem}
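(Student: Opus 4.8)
The plan is to unfold the two definitions and observe that they coincide, the only real content being the bookkeeping needed to see that the target problem is well-formed. First I would recall that, by Definition~\ref{dfn:planning:assumption}, the hypothesis that $\omega$ is an environment assumption for the domain $D$ means precisely that $Str_\env(\omega_D) \cap Str_\env(\omega) \neq \emptyset$. The first step is to check that this intersection is exactly $Str_\env(\omega_D \wedge \omega)$. This is a one-line calculation: $\sigma_\env \sat (\omega_D \wedge \omega)$ means that for every $\sigma_\ag$ the trace $\pi_{\sigma_\ag,\sigma_\env}$ satisfies both $\omega_D$ and $\omega$, and since a universal quantifier distributes over a conjunction, this is equivalent to ($\forall \sigma_\ag.\ \pi_{\sigma_\ag,\sigma_\env} \models \omega_D$) and ($\forall \sigma_\ag.\ \pi_{\sigma_\ag,\sigma_\env} \models \omega$), i.e.\ to $\sigma_\env \sat \omega_D$ and $\sigma_\env \sat \omega$. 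In particular $\omega_D \wedge \omega$ is environment realizable, so $(E,A,\omega_D \wedge \omega,\gamma)$ is indeed a well-formed Synthesis under Assumptions problem in the sense of Definition~\ref{dfn:LT:assumptions}.

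Second, I would spell out each side of the claimed equivalence. By the definition of solving an $\SF$ planning under assumptions problem, $\sigma_\ag$ solves $P = ((D,\omega),\gamma)$ iff $\pi_{\sigma_\ag,\sigma_\env} \models \gamma$ for every $\sigma_\env \sat (\omega_D \wedge \omega)$. By Definition~\ref{dfn:LT:assumptions}(4), taking the assumption formula to be $\omega_D \wedge \omega$, $\sigma_\ag$ realizes $\gamma$ assuming $\omega_D \wedge \omega$ iff $\pi_{\sigma_\ag,\sigma_\env} \models \gamma$ for every $\sigma_\env \sat (\omega_D \wedge \omega)$. These two conditions are literally the same statement, so the equivalence is immediate, for every fixed $\sigma_\ag$.

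Since there is essentially no obstacle here, the only point that deserves any care is the first step: verifying $Str_\env(\omega_D \wedge \omega) = Str_\env(\omega_D) \cap Str_\env(\omega)$, which both guarantees that the induced synthesis problem is well-formed (its assumption is environment realizable, as required) and makes the quantification ``$\forall \sigma_\env \sat (\omega_D \wedge \omega)$'' agree on the two sides. Everything else is a matter of matching the relevant definitions verbatim, which is why the theorem is stated as following immediately from them.
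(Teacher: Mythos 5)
Your proposal is correct and matches the paper's treatment: the paper offers no separate argument, stating that the result follows immediately from the definitions, which is exactly your observation that Definition~\ref{dfn:LT:assumptions}(4) with assumption $\omega_D \wedge \omega$ and the linear-time planning definition impose literally the same condition on $\sigma_\ag$. Your extra check that $Str_\env(\omega_D \wedge \omega) = Str_\env(\omega_D) \cap Str_\env(\omega)$, ensuring the induced synthesis problem is well-formed (its assumption is environment realizable), is a sound and welcome piece of bookkeeping that the paper leaves implicit.
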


Thus, we can solve \LTL planning under assumptions by reducing to \LTL synthesis under assumptions, which itself can be solved by known results (i.e., Theorem~\ref{fact:synthesis}):
\begin{corollary}\label{thm:solving:PUA:LTL:combined}
 Solving \LTL planning under assumptions is $2$\exptime-complete.
\end{corollary}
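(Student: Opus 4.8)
The plan is to prove matching upper and lower bounds. For the upper bound I would reduce \LTL planning under assumptions to \LTL synthesis under assumptions and invoke Theorem~\ref{thm:solving:SUA} (part~1); for the lower bound I would reduce ordinary \LTL synthesis --- which is the special case of ``synthesis under the trivial assumption $\true$'' --- to \LTL planning under assumptions, using the Synthesis-to-Planning translation (Theorem~\ref{prop:synthesis to planning}) together with Theorem~\ref{fact:synthesis} (part~1).

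For the upper bound, take an \LTL planning under assumptions problem $P = ((D,\omega),\gamma)$. By Lemma~\ref{lem:omegaD:LTL} there is an \LTL formula, which I also denote $\omega_D$, equivalent to the property $\omega_D$ and of size linear in $|D|$. Form the conjunction $\omega_D \wedge \omega \in \LTL$, of size linear in $|D| + |\omega|$. Since $P$ is a valid \LTL planning under assumptions problem, $\omega$ is by definition an environment assumption for $D$, i.e.\ $Str_\env(\omega_D) \cap Str_\env(\omega) \neq \emptyset$; equivalently $\omega_D \wedge \omega$ is environment realizable, so $(E,A,\omega_D \wedge \omega,\gamma)$ is a bona fide \LTL synthesis under assumptions problem. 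By Theorem~\ref{prop:planning to synthesis}, an agent strategy $\sigma_\ag$ solves $P$ iff it realizes $\gamma$ assuming $\omega_D \wedge \omega$, and a finite-state such strategy can be extracted from one for the latter. By Theorem~\ref{thm:solving:SUA} (part~1) the latter is decidable, with strategy returned, in $2$\exptime in $|\omega_D \wedge \omega| + |\gamma|$, hence in $2$\exptime in $|D| + |\omega| + |\gamma|$.

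For the lower bound, given an instance $\gamma \in \LTL(E \cup A)$ of ordinary \LTL synthesis, let $U$ be the universal domain over $E$ and $A$; its compact representation $(E,A,\true,\true,\true)$ has size linear in $|E| + |A|$, and by Example~\ref{ex} (item~1) the formula $\true$ is an environment assumption for $U$, so $P' = ((U,\true),\gamma)$ is a valid \LTL planning under assumptions problem. By Theorem~\ref{prop:synthesis to planning}, $\sigma_\ag$ solves $P'$ iff $\sigma_\ag$ realizes $\gamma$ assuming $\true$, which, since $Str_\env(\true) = Str_\env$, is exactly ordinary realizability of $\gamma$. As deciding the latter is $2$\exptime-hard by Theorem~\ref{fact:synthesis} (part~1), so is \LTL planning under assumptions.

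I do not anticipate a genuine obstacle: the corollary is an assembly of the Planning-to-Synthesis and Synthesis-to-Planning translations with already-proven bounds. The only points needing care are the polynomial-size bookkeeping --- which relies on the \emph{linear} bound of Lemma~\ref{lem:omegaD:LTL}, so one should route through the \LTL formula for $\omega_D$ rather than the \DPW $M_D$ of Lemma~\ref{lem:omegaD:DPW}, whose exponential blow-up would be wasteful here --- and verifying that both reductions produce legitimate problems, i.e.\ that the induced assumptions remain environment realizable, which holds immediately in both directions.
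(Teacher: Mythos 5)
Your proof is correct and follows essentially the same route as the paper: the upper bound reduces \LTL planning under assumptions to \LTL synthesis under assumptions via Theorem~\ref{prop:planning to synthesis} (using the linear-size \LTL encoding of $\omega_D$ from Lemma~\ref{lem:omegaD:LTL}) and invokes the $2$\exptime{} bound for the latter, while the lower bound takes the universal domain with assumption $\true$ exactly as the paper does. Your extra care about the induced assumptions being environment realizable is a harmless elaboration of what the paper leaves implicit.
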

However, this does not distinguish the complexity measured in the size of the domain from that in the size of the assumption and goal formulas. 
{We take this up next.}


\section{Solving Planning under Assumptions}

In this section we show how to solve Planning under Assumptions for concrete specification languages \SF, i.e., \SF = \LTL and \SF = \DPW. We measure the complexity in two different ways: we fix the domain $D$ and measure the complexity with respect to the size of the formulas or automata for the environment assumption and the agent goal, this is called \emph{goal/assumption complexity}; 
and  we fix the formulas/automata and measure the complexity with respect to the size of the domain, this is called the \emph{domain complexity}. \footnote{Formally, if $C$ is a complexity class, we say that \emph{goal/assumption complexity is in $C$} if for every domain $D_0$ the complexity of deciding if there is an agent strategy solving $P = ((D_0,\omega),\gamma)$, is in $C$. A similar definition holds for domain complexity. Also, we say that the \emph{goal/assumption complexity is $C$-hard} if there exists a domain $D_0$ such that the problem of deciding if there is an agent strategy solving $P = ((D_0,\omega),\gamma)$, is $C$-hard.}

We begin with \SF = \DPW and consider the following algorithm:
Given $P = ((D,\omega),\gamma)$ in which $\omega$ is represented by a \DPW $M_\omega$ and $\gamma$ is represented by a \DPW $M_\gamma$, perform the following steps:

 \begin{tabbing}
===\===\===\===\=\+\kill
\textbf{Alg 1. Solving \DPW planning under assumptions}\\
Given domain $D$, assumption $M_\omega$, goal $M_\gamma$.\\
1:\>Form \DPW $M_D$ equivalent to $\omega_D$.\\
2:\>Form \DPW $M$ for $(M_D \wedge M_\omega) \limp M_\gamma$.\\
3:\>Solve the parity game on $M$.
\end{tabbing}

The first step results in a \DPW whose size is exponential in the size of $D$ and with a constant number of colors (Lemma~\ref{lem:omegaD:DPW}). 
The second step results in a \DPW whose size is polynomial in the number of states of the \DPW{s} involved (i.e., $M_D,M_\omega$ and $M_\gamma$), and exponential in the number of their colors (Lemma~\ref{lem:DPW Boolean combinations}).
For the third step, the think of the \DPW $M$ as a parity game: play starts in the initial state, and at each step, if $q$ is the current state of $M$, first the environment picks $s \in \E$ and then the agent picks an action $a \in \A$, i.e., an evaluation of the action variables. The subsequent step starts in the state of $M$ resulting from taking the unique transition from $q$ labeled $s \cup a$. This produces a run of the \DPW which the agent is trying to ensure is successful (i.e., the largest color occurring infinitely often is even). 

Formally, we say that an agent strategy $\sigma_\ag$ is \emph{winning} if 
for every environment strategy $\sigma_\env$, the unique run of the \DPW on input word $\pi_{\sigma_\ag,\sigma_\env}$ is successful. 
Deciding if the a player has a winning strategy, and returning a finite-state strategy (it one exists), is called \emph{solving} the game. 
Parity games can be solved in 
time polynomial in the size of $M$ and exponential in the number of colors of $M$~\cite{ALG02}.\footnote{Better algorithms are known, e.g.~\cite{CaludeJKL017}, but are not helpful for this paper.}  

{The analysis of the above algorithm shows the following.}
\begin{theorem} \label{thm:solving:PUA:DPW} \hspace{0em}
\begin{enumerate} \item The domain complexity of solving \DPW planning under assumptions is in \exptime. 
\item The goal/assumption complexity of solving \DPW planning under assumptions is in \ptime in their sizes and \exptime in the number of their colors.
\end{enumerate}
\end{theorem}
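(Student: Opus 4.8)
The plan is to verify the complexity bounds by directly analyzing Alg~1, tracking the sizes and number of colors of the automata produced at each step, and then invoking the known bound for solving parity games (stated just before the theorem).

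First I would establish the domain complexity (part 1). Fix the assumption automaton $M_\omega$ and the goal automaton $M_\gamma$; these are now of constant size with a constant number of colors. By Lemma~\ref{lem:omegaD:DPW}, Step~1 produces $M_D$ with $O(2^{|D|})$ states and only two colors. Step~2, using Lemma~\ref{lem:DPW Boolean combinations}, first complements $M_D \wedge M_\omega$ (complementation does not increase the state count or colors of a \DPW), and then takes a disjunction; the resulting \DPW $M$ for $(M_D \wedge M_\omega) \limp M_\gamma$ has a number of states polynomial in $|M_D| \cdot |M_\omega| \cdot |M_\gamma|$ — hence still $2^{O(|D|)}$ — and a number of colors bounded by a constant times $(c_D + c_\omega + c_\gamma)$, which is a constant since $c_D = 2$ and $c_\omega, c_\gamma$ are fixed. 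Step~3 solves the parity game on $M$, which takes time polynomial in $|M|$ and exponential in the number of colors of $M$; since $|M| = 2^{O(|D|)}$ and the number of colors is constant, the total running time is $2^{O(|D|)}$, i.e.\ in \exptime. I also need to note correctness: by Theorem~\ref{prop:planning to synthesis} the planning problem reduces to synthesis under assumptions for $(E,A,\omega_D \wedge \omega, \gamma)$, and by Theorem~\ref{thm:det} (applicable since $\omega_D \wedge \omega$ is an environment assumption for $D$ by hypothesis, hence environment realizable) this in turn is equivalent to ordinary synthesis for the implication $(\omega_D \wedge \omega) \limp \gamma$, which is exactly the parity game solved in Step~3; since \DPW are Borel, Theorem~\ref{thm:det} applies.

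Next I would establish the goal/assumption complexity (part 2), using the same algorithm but now fixing the domain $D$. Step~1 now produces a \DPW $M_D$ of constant size with two colors. In Step~2, let $n_\omega, c_\omega$ and $n_\gamma, c_\gamma$ be the sizes and color counts of $M_\omega$ and $M_\gamma$. Then $M_D \wedge M_\omega$ has $O(n_\omega)$ states and $O(c_\omega)$ colors (constant factors absorbed from $M_D$); its complement has the same; and the disjunction with $M_\gamma$ has a number of states polynomial in $n_\omega \cdot n_\gamma$ and a number of colors $O(c_\omega + c_\gamma)$, where the polynomial also has a factor $d^2 d!$ with $d = O(c_\omega + c_\gamma)$. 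Thus $|M|$ is polynomial in the sizes of $M_\omega, M_\gamma$ and exponential in the number of their colors (from the $d!$ term), while the number of colors of $M$ is $O(c_\omega + c_\gamma)$. Solving the parity game on $M$ in Step~3 then costs time polynomial in $|M|$ and exponential in the number of colors of $M$; combining, we get a bound that is polynomial in $|M_\omega| + |M_\gamma|$ and exponential in $c_\omega + c_\gamma$, as claimed.

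The main obstacle — really a bookkeeping subtlety rather than a deep one — is being careful about how the factor $d^2 d!$ from Lemma~\ref{lem:DPW Boolean combinations}(2) interacts with the two complexity measures: in the domain measure $d$ is constant so this factor is harmless, but in the goal/assumption measure $d = O(c_\omega + c_\gamma)$, so $d!$ contributes the ``\exptime in the number of colors'' part while leaving the state-dependence polynomial. I would also need to double-check that forming the implication as $\neg(M_D \wedge M_\omega) \lor M_\gamma$ is the right bracketing so that the complement is applied to the smaller object; and that Theorem~\ref{thm:det}'s footnote about Borel sets indeed covers \DPW-definable languages (it does, as they are $\omega$-regular, hence Borel). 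The remaining steps are routine substitution of the size bounds from Lemmas~\ref{lem:DPW Boolean combinations} and~\ref{lem:omegaD:DPW} into the parity-game complexity bound.
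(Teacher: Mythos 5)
Your proposal is correct and follows essentially the same route as the paper: the paper's proof is exactly the analysis of Alg~1, plugging the bounds from Lemma~\ref{lem:omegaD:DPW} (exponential-size, two-color $M_D$) and Lemma~\ref{lem:DPW Boolean combinations} into the parity-game solving bound, with correctness resting on Theorems~\ref{prop:planning to synthesis} and~\ref{thm:det}. Your bookkeeping of where the $d^2d!$ factor lands under the two complexity measures matches the paper's intended accounting, so there is nothing to add.
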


{Moreover,} by converting \LTL formulas to \DPW with exponentially many colors and double-exponential many states \cite{DBLP:conf/banff/Vardi95,DBLP:journals/lmcs/Piterman07}, we get the upper bounds in the following:
\begin{theorem} \label{thm:solving:PUA:LTL}
\begin{enumerate}
\item The domain complexity of solving \LTL planning under assumptions is \exptime-complete.
\item The goal/assumption complexity of solving \LTL planning under assumptions is $2$\exptime-complete. 
\end{enumerate}
\end{theorem}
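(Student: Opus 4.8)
The plan is to obtain the two upper bounds by composing the \LTL-to-\DPW translation with Theorem~\ref{thm:solving:PUA:DPW}, and to obtain the two matching lower bounds from two well-chosen families of instances.

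\textbf{Upper bounds.} Given $P = ((D,\omega),\gamma)$, I would first translate $\omega$ and $\gamma$ into equivalent \DPW $M_\omega$ and $M_\gamma$; by~\cite{DBLP:conf/banff/Vardi95,DBLP:journals/lmcs/Piterman07} each has at most doubly-exponentially many states and singly-exponentially many colors in $n \doteq |\omega| + |\gamma|$, and this translation is independent of $D$. Then I would apply Theorem~\ref{thm:solving:PUA:DPW} to the \DPW problem $((D,M_\omega),M_\gamma)$. When the formulas are fixed (domain complexity), $M_\omega$ and $M_\gamma$ have constant size and constantly many colors, so part~(1) gives an \exptime bound in $|D|$. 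When the domain is fixed (goal/assumption complexity), $|M_D|$ is constant, so part~(2) bounds the running time by a polynomial in $|M_\omega| + |M_\gamma| \le 2^{2^{O(n)}}$ times an exponential in the number of colors $2^{O(n)}$, hence by $2^{2^{O(n)}}$, i.e.\ $2$\exptime in $n$. The only thing to double-check is that the blow-up in Lemma~\ref{lem:DPW Boolean combinations} (a $d!$ in the state count and $O(d)$ in the color count, with $d$ the sum of the colors) does not push past double exponential: since $d = 2^{O(n)}$ we get $d! = 2^{2^{O(n)}}$ states and still $2^{O(n)}$ colors, so it does not.

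\textbf{Lower bounds.} For the domain complexity, I would reduce from plan existence in FOND domains with a reachability goal, which is \exptimeC in the size of the domain~\cite{Rintanen:ICAPS04}. By the correspondence recalled earlier, this problem over a domain $D$ with reachability goal $\mathit{Goal}$ is exactly \LTL planning under assumptions on $D$ with $\omega \doteq \true$ and $\gamma \doteq \mathit{Exec} \land \eventually \mathit{Goal}$; and by a routine modification of the reduction (arrange that every action is always available, redirecting would-be-inapplicable actions to a non-goal sink, so that $\mathit{Exec} \equiv \true$, and record the reachability goal in a single sticky fluent $g$) one may take $\omega \doteq \true$ and $\gamma \doteq \eventually g$ to be \emph{fixed}, of constant size. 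This gives \exptime-hardness in $|D|$. For the goal/assumption complexity, I would fix the domain to be the universal domain $U$ over a set $E_0 \cup A_0$ of atoms chosen large enough to carry the classical $2$\exptime-hardness reduction for \LTL agent-synthesis. Since $\omega_U \equiv \true$, Theorem~\ref{prop:synthesis to planning} gives that an agent strategy solves $((U,\true),\gamma)$ iff it realizes $\gamma$ in the ordinary sense, so $2$\exptime-hardness follows from Theorem~\ref{fact:synthesis} (part~1).

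\textbf{Main obstacle.} I expect the upper bounds and the domain-complexity lower bound to be routine bookkeeping on top of results already established. The one delicate point is the goal/assumption lower bound: it needs \LTL synthesis to be $2$\exptime-hard \emph{already over a fixed finite vocabulary}, so that a single fixed domain $U$ over $E_0 \cup A_0$ can witness the bound; if the vocabulary of the hard formulas had to grow with the instance, no fixed domain would do. This strengthening is implicit in the Pnueli--Rosner construction~\cite{PnueliR89} (the number of propositions used there is independent of the simulated machine), and it is the place where one cannot simply invoke Theorem~\ref{fact:synthesis} as a black box.
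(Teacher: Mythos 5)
Your proposal follows essentially the same route as the paper: the upper bounds come from translating $\omega$ and $\gamma$ into \DPW (doubly-exponential states, singly-exponential colors) and invoking Theorem~\ref{thm:solving:PUA:DPW}, the domain-complexity lower bound from \exptime-hardness of FOND planning with reachability goals~\cite{Rintanen:ICAPS04}, and the goal/assumption lower bound from $2$\exptime-hardness of \LTL synthesis, obtained by taking $\omega \doteq \true$ and the universal domain. Your additional bookkeeping (checking that the $d!$ blow-up stays within $2$\exptime, arranging a fixed goal $\eventually g$ so the domain-hardness instances use constant-size formulas, and noting that \LTL-synthesis hardness must already hold over a fixed vocabulary) merely makes explicit details the paper leaves implicit and does not change the approach.
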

{For the matching lower-bounds, we have that}
the domain complexity is \exptime-hard follows from the fact that planning with reachability goals and no assumptions is \exptime-hard~\cite{Rintanen:ICAPS04}; to see that the goal/assumption complexity is $2$\exptime-hard note that \LTL synthesis, known to be $2$\exptime-hard~\cite{PnueliR89,rosner1992modular}, is a special case (take $\omega \doteq \true$ and $D$ to be the universal domain). 

{Similarly, one can apply this technique to solving Fair \LTL planning under assumptions. The exact complexity, however, is open. See the conclusion for a discussion.}


\section{Focusing on finite traces}
In this section we revisit the definitions and results in case that assumptions and goals are expressed as linear-time properties over \emph{finite} traces. There are two reasons to do this. 
First,  in AI and CS applications executions of interest are often finite~\cite{DegVa13}.
Second, the algorithms presented for the infinite-sequence case involve complex constructions on automata/games that are notoriously hard to optimize~\cite{DFogartyKVW13}. Thus, we will not simply reduce the finite-trace case to the infinite-trace case~\cite{GMM14}. We begin by carefully defining the setting.

\subsubsection{Synthesis and linear-time specifications over finite traces}
We define synthesis over finite traces in a similar way to the infinite-trace case, {cf.~\cite{DegVa15,Camacho:KR18}.}
The main difference is that agent strategies 
$\sigma_\ag:\E^+ \to \A$ can be partial. This represents the situation that the agent stops the play. Environment strategies $\sigma_\env:\A^* \to \E$ are total (as before). Thus, the resulting play $\pi_{\sigma_\ag,\sigma_\env}$ may be finite, if the agent chooses to stop, as well as infinite.\footnote{Formally,  $\pi_{\sigma_\ag,\sigma_\env}$ is redefined to be the longest trace (it may be finite or infinite) that complies with both strategies.} Objectives may be expressed in general specification formalisms \SFf for finite traces, e.g., \SFf = \LTLf (\LTL over finite traces\footnote{All our results for \LTLf also hold for linear-dynamic logic over finite traces (\LDLf)~\cite{DegVa13}. 
}), \SFf = \DFA (deterministic finite word automata). For $\phi \in \LTf$, we overload notation and write $[[\phi]]$ for the set of finite traces $\phi$ defines.

We now define realizability in the finite-trace case:
\begin{definition}
Let $\phi \in \LTf$.
\begin{enumerate} 
\item We say that \emph{$\sigma_\ag$ realizes $\phi$ (written $\sigma_\ag \sat \phi$)} if $\forall \sigma_\env. \left(\pi_{\sigma_\ag,\sigma_\env} \text{ is finite and } \pi_{\sigma_\ag,\sigma_\env} \in [[\phi]]\right)$.  
\item We say that \emph{$\sigma_\env$ realizes $\phi$ (written $\sigma_\env \sat \phi$)} if $\forall \sigma_\ag. \left(\text{if } \pi_{\sigma_\ag,\sigma_\env} \text{ is finite, then } \pi_{\sigma_\ag,\sigma_\env}\in [[\phi]]\right)$. 
\end{enumerate}
\end{definition}
The asymmetry in the definition results from the fact that stopping is controlled by the agent.

Duality still holds, and is easier to prove since it amounts to determinacy of reachability games~\cite{ALG02}:
\begin{lemma}[Duality] \label{lem:duality:finite}
For every $\phi \in \SFf$ we have that $\phi$ is not agent realizable iff $\neg \phi$ is environment realizable.
\end{lemma}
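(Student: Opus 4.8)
The plan is to reduce the Duality Lemma to the classical determinacy of reachability games, as the paper suggests, by encoding the finite-trace game as an infinite-trace game with a reachability objective. First I would observe that the only subtlety is the role of stopping, which is under the agent's control: a finite play $\pi_{\sigma_\ag,\sigma_\env}$ arises exactly when $\sigma_\ag$ (a partial function $\E^+ \to \A$) becomes undefined on the current history. So I would introduce a fresh ``stop'' action $\mathit{stop} \notin \A$ and consider the infinite game over the extended action alphabet $\A \cup \{\mathit{stop}\}$ in which, once the agent plays $\mathit{stop}$, the play is frozen in a designated sink. A partial agent strategy $\sigma_\ag$ in the finite game corresponds bijectively to a total agent strategy $\hat\sigma_\ag$ in the extended game that plays $\mathit{stop}$ precisely when $\sigma_\ag$ is undefined, and total environment strategies correspond to themselves. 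Under this correspondence, the finite trace $\pi_{\sigma_\ag,\sigma_\env}$ (when it is finite) is recovered as the finite prefix of $\pi_{\hat\sigma_\ag,\sigma_\env}$ before the first $\mathit{stop}$, and the play is infinite in the finite game iff no $\mathit{stop}$ is ever played.

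Next I would phrase the objective. Given $\phi \in \SFf$, let $W_\ag$ be the set of infinite plays in the extended game in which the agent eventually plays $\mathit{stop}$ and the finite prefix before the first $\mathit{stop}$ lies in $[[\phi]]$; this is an ``eventually reach a good stopping configuration'' condition, hence (the winning set of) a reachability game for the agent — here I use that membership of a finite trace in $[[\phi]]$ is decided by a finite automaton when $\SFf$ is \LTLf or \DFA, so ``reaching a good stop'' is a genuine reachability condition on the product arena. By definition, $\sigma_\ag \sat \phi$ in the finite-trace sense iff $\hat\sigma_\ag$ is winning for $W_\ag$ in the extended game. Dually, I would check that $\sigma_\env \sat \neg\phi$ in the finite-trace sense — meaning for all $\sigma_\ag$, if $\pi_{\sigma_\ag,\sigma_\env}$ is finite then it is \emph{not} in $[[\phi]]$ — corresponds exactly to $\sigma_\env$ being winning for the complement objective $\overline{W_\ag}$ (all plays that either never stop, or stop with a prefix outside $[[\phi]]$), which is the environment's reachability/safety-style objective in the same game. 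Then Martin's determinacy of reachability games (equivalently, the Borel determinacy invoked earlier, but here elementary) gives: either the agent has a winning strategy for $W_\ag$, or the environment has one for $\overline{W_\ag}$. Translating back through the bijection yields: $\phi$ is agent realizable, or $\neg\phi$ is environment realizable; and these are mutually exclusive since a single play $\pi_{\hat\sigma_\ag,\sigma_\env}$ cannot lie in both $W_\ag$ and $\overline{W_\ag}$, giving the ``iff''.

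The main obstacle — really the only place needing care — is bookkeeping around the asymmetry in the two realizability definitions: the agent must force the play to be finite \emph{and} in $[[\phi]]$, whereas the environment only needs that \emph{if} the play happens to be finite then it is outside $[[\phi]]$. I would make sure the reachability winning set $W_\ag$ is set up so that ``never stopping'' counts as an agent loss (so an infinite $\pi_{\sigma_\ag,\sigma_\env}$ is not realizing $\phi$, matching the finite-trace definition) while simultaneously ``never stopping'' counts as an environment win (matching that an infinite play vacuously satisfies the environment's clause). This is consistent precisely because $\overline{W_\ag}$ is the complement, so there is no tension; the one-line check is that the finite-trace clauses for $\sigma_\ag \sat \phi$ and $\sigma_\env \sat \neg\phi$ are, respectively, ``$\pi \in W_\ag$'' and ``$\pi \in \overline{W_\ag}$'' evaluated on $\pi = \pi_{\hat\sigma_\ag,\sigma_\env}$. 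I would close by noting this argument is uniform in $\SFf$ among formalisms whose finite-trace languages are regular (in particular \LTLf, \LDLf, \DFA), which covers all cases used in the sequel.
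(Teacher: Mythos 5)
Your proposal is correct and follows essentially the same route as the paper, which proves this lemma simply by remarking that it amounts to determinacy of reachability games: your stop-action encoding, the induced reachability objective $W_\ag$, and the appeal to open/reachability-game determinacy are exactly the intended reduction, spelled out in detail. The only (harmless) extra hypothesis you invoke is regularity of $[[\phi]]$, which is not actually needed for openness of $W_\ag$ but certainly holds for the formalisms (\LTLf, \LDLf, \DFA) used in the paper.
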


\subsubsection{Linear temporal logic on finite traces (\LTLf)} The logic $\LTLf$ has the same syntax as $\LTL$ but is interpreted on finite traces $\pi \in (2^\AP)^+$. Formally, for $n \leq len(\pi)$ (the length of $\pi$) we only reinterpret the temporal operators:
\begin{itemize}
	\item  $(\pi,n) \models \nextX \varphi$ iff $n < len(\pi)$ and $(\pi,n+1) \models \varphi$;
	\item  $(\pi,n) \models \varphi_1 \until \varphi_2$ iff 
	there exists $i$ with $n \leq i \leq len(\pi)$ such that $(\pi,i) \models \varphi_2$ and for all $i \leq j < n$, $(\pi,j) \models \varphi_1$.
\end{itemize}
Let $\tilde{\nextX}$ denote the dual of $\nextX$, i.e., $\tilde{\nextX} \doteq \neg \nextX \neg \varphi$. Semantically we have that
\begin{itemize}
	\item  $(\pi,n) \models \tilde{\nextX} \varphi$ iff $n < len(\pi)$ implies $(\pi,n+1) \models \varphi$.
\end{itemize}

\subsubsection{Deterministic finite automata (\DFA)} 

A \DFA over $\AP$ is a tuple $M = (Q,q_{in},T,F)$ which is like a \DPW except that $col$ is replaced by a set $F \subseteq Q$ of final states. The run on a finite input trace $\pi \in (2^\AP)^*$ is successful if it ends in a final state. We recall that \DFA are closed under Boolean operations using classic algorithms (e.g., see \cite{DBLP:conf/banff/Vardi95}).
Also, \LTLf formulas $\varphi$ (and also \LDLf formulas) can be effectively translated into \DFA. This is done in three classic simple steps that highlight the power of the automata-theoretic approach: convert $\varphi$ to an alternating automaton (poly), then into a nondeterministic finite automaton (exp), and then into a \DFA (exp). These steps are outlined in detail in, e.g., \cite{DegVa13}.

\subsubsection{Solving Synthesis over finite traces} \LTf agent synthesis is the problem, given $\phi \in \LTf$, of deciding if the agent can realize $\phi$. Now, solving \DFA agent synthesis is \ptime-complete: it amounts to solving a reachability game on the given \DFA $M$, which can be done with an algorithm that captures how close the agent is to a final state, i.e., a least-fixpoint of the operation. Finally, to solve \LTLf agent synthesis first translate the \LTLf formula to a \DFA and then run the fixpoint algorithm  (also, \LTLf agent synthesis is $2$\exptime-complete)~\cite{DegVa15}.

Note that, by Duality, solving \LTLf environment realizability and solving \LTLf agent realizability are inter-reducible (and thus the former is also $2$\exptime-complete). Thus, to decide if $\phi$ is environment realizable we simply negate the answer to whether $\neg \phi$ is agent realizable. However, to extract an environment strategy, one solves the dual safety game.

\subsubsection{Synthesis under assumptions}
We say that $\omega \in \LTf$ is an \emph{environment assumption} if $\omega$ is environment realizable. Solving 
\LTf synthesis under assumptions means to decide if there is an agent strategy $\sigma_\ag$ such that
\[\forall \sigma_\env \sat \omega. \left(\pi_{\sigma_\ag,\sigma_\env} \text{ is finite and } \pi_{\sigma_\ag,\sigma_\env} \models \gamma\right).\]

We now consider the case that $\LTf = \LTLf$. Checking if $\omega \in \LTLf$ is an environment assumption is, by definition, 
the problem of deciding if $\omega$ is environment realizable, as just discussed. Hence we can state the following:
\begin{theorem} \hspace{0em}
\begin{enumerate}
 \item Deciding if an $\LTLf$ formula $\omega$ is an environment assumption is $2$\exptime-complete. 
 \item Deciding if a $\DFA$ $\omega$ is an environment assumption is \ptime-compete (cf.~\cite{ALG02}).
\end{enumerate}
\end{theorem}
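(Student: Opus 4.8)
The plan is to observe that, by definition, $\omega$ being an environment assumption is exactly $\omega$ being environment realizable, so both parts reduce to pinning down the complexity of environment realizability for $\LTLf$ (part~1) and for $\DFA$ (part~2). The crucial tool is the Duality Lemma (Lemma~\ref{lem:duality:finite}): $\omega$ is environment realizable iff $\neg\omega$ is \emph{not} agent realizable. This lets us reduce environment realizability to agent realizability, which we already know how to solve. Note we cannot simply ``swap the roles of the two players'', because of the asymmetry that the agent, not the environment, controls stopping; the Duality Lemma is precisely what absorbs this asymmetry.

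For part~1, the upper bound goes as follows: given $\omega \in \LTLf$, form $\neg\omega$ (of size linear in $|\omega|$), decide whether $\neg\omega$ is agent realizable --- which is in $2$\exptime by the known result that $\LTLf$ agent synthesis is $2$\exptime-complete~\cite{DegVa15} --- and return the negated answer; since $2$\exptime is closed under complement, deciding whether $\omega$ is an environment assumption is in $2$\exptime. For the lower bound one can either invoke directly that $\LTLf$ environment realizability is already $2$\exptime-hard (established above via inter-reducibility with agent realizability), or give the reduction explicitly: from an instance $\phi$ of $\LTLf$ agent realizability, $\phi$ is agent realizable iff $\neg\phi$ is not environment realizable iff $\neg\phi$ is not an environment assumption; hence the complement of our problem is $2$\exptime-hard, and since $2$\exptime $=$ co-$2$\exptime, so is our problem.

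For part~2 the same strategy applies with $\DFA$ in place of $\LTLf$. Given a $\DFA$ $\omega$, its complement $\DFA$ $\neg\omega$ has the same size (swap accepting and non-accepting states), and deciding whether $\neg\omega$ is agent realizable amounts to solving a reachability game on it, which is in \ptime (indeed \ptime-complete); by the Duality Lemma, $\omega$ is an environment assumption iff this reachability game is \emph{not} won by the agent, i.e.\ iff the dual safety game on $\omega$ is won by the environment --- either viewpoint yields a \ptime decision procedure (least-fixpoint for reachability, greatest-fixpoint for safety), cf.~\cite{ALG02}. For \ptime-hardness, transfer the \ptime-hardness of $\DFA$ agent synthesis (reachability games) through the same logspace duality reduction, using that $\DFA$ complementation is logspace-computable and \ptime is closed under complement and logspace reductions.

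The only real subtlety --- the ``main obstacle'', such as it is --- is to make sure the reductions respect their resource bounds (linear blow-up for Boolean negation of $\LTLf$, trivial/logspace blow-up for $\DFA$ complementation) and to use closure of the relevant complexity classes ($2$\exptime, \ptime) under complement when passing from ``agent realizable'' to ``environment realizable'' via the Duality Lemma. Beyond that the argument is routine, since all the heavy lifting (the $2$\exptime and \ptime algorithms for synthesis, and the determinacy of reachability games underlying duality) has already been done in the preceding subsections.
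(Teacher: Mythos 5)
Your proposal is correct and follows essentially the same route as the paper: unfold the definition (environment assumption $=$ environment realizable), invoke the finite-trace Duality Lemma to inter-reduce environment and agent realizability via negation, and then appeal to the known $2$\exptime-completeness of \LTLf agent synthesis and the \ptime-completeness of \DFA agent synthesis (reachability/safety games). The extra bookkeeping you supply (linear-size negation, logspace \DFA complementation, closure of $2$\exptime and \ptime under complement) is exactly what the paper leaves implicit.
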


Turning to \LTLf synthesis under assumptions  we have 
that synthesis under assumptions and synthesis of the implication are equivalent. Indeed, as before, the key point is the duality which we have in 
Lemma~\ref{lem:duality:finite}:

\begin{theorem} \label{thm:det:finite}
Suppose $\omega \in \LTf$ is an environment assumption.
 The following are equivalent:
 \begin{enumerate}
  \item There is an agent strategy realizing $\omega \limp \gamma$.
  \item There is an agent strategy realizing $\gamma$ assuming $\omega$.
 \end{enumerate}
\end{theorem}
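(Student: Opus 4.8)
The plan is to mirror the proof of Theorem~\ref{thm:det} as closely as possible, replacing Martin's Borel Determinacy Theorem with the weaker (and self-contained) Duality lemma for finite traces, Lemma~\ref{lem:duality:finite}. The direction $1 \rightarrow 2$ is the easy one: it is the exact finite-trace analogue of the Lemma preceding Theorem~\ref{thm:det}, so I would first re-prove that implication. Let $\sigma_\ag$ realize $\omega \limp \gamma$, meaning $\forall \sigma_\env.\,\pi_{\sigma_\ag,\sigma_\env}$ is finite and satisfies $\omega \limp \gamma$. To see $\sigma_\ag$ realizes $\gamma$ assuming $\omega$, take any $\sigma_\env \sat \omega$, and consider $\pi = \pi_{\sigma_\ag,\sigma_\env}$. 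By the choice of $\sigma_\ag$, $\pi$ is finite and satisfies $\omega \limp \gamma$; since $\pi$ is finite and $\sigma_\env \sat \omega$, the definition of environment realizability over finite traces gives $\pi \models \omega$; hence $\pi \models \gamma$, as required. Note the one subtlety peculiar to the finite-trace setting: it is the agent strategy $\sigma_\ag$ (realizing the implication) that guarantees finiteness of every resulting play, so this is what licenses applying $\sigma_\env \sat \omega$.

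For the converse $2 \rightarrow 1$ I would argue by contraposition exactly as in Theorem~\ref{thm:det}. Suppose $\omega \limp \gamma$ is not agent realizable. By Duality (Lemma~\ref{lem:duality:finite}), $\neg(\omega \limp \gamma) \equiv \omega \wedge \neg\gamma$ is environment realizable: there is $\sigma_\env$ with $\forall \sigma_\ag.\,($if $\pi_{\sigma_\ag,\sigma_\env}$ is finite then $\pi_{\sigma_\ag,\sigma_\env} \models \omega \wedge \neg\gamma)$. In particular this $\sigma_\env$ realizes $\omega$, i.e.\ $\sigma_\env \sat \omega$. Now assume for contradiction that $2$ holds, and pick $\sigma_\ag$ realizing $\gamma$ assuming $\omega$. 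By definition of realizing under assumptions, since $\sigma_\env \sat \omega$ we get that $\pi_{\sigma_\ag,\sigma_\env}$ is finite and $\pi_{\sigma_\ag,\sigma_\env} \models \gamma$. But finiteness plus the property of $\sigma_\env$ yields $\pi_{\sigma_\ag,\sigma_\env} \models \neg\gamma$, a contradiction. Hence $2 \rightarrow 1$, completing the equivalence.

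The only place that needs care — and the step I would flag as the main obstacle — is making sure the finiteness bookkeeping lines up, because in the finite-trace case a play can in principle be infinite and the definitions of agent- vs.\ environment-realizability treat that possibility asymmetrically. The key observations that resolve this are: (i) in $1 \rightarrow 2$, realizing the implication already forces every play against $\sigma_\ag$ to be finite, so the environment-realizability clause for $\omega$ applies cleanly; and (ii) in $2 \rightarrow 1$, realizing $\gamma$ assuming $\omega$ is defined to require $\pi_{\sigma_\ag,\sigma_\env}$ finite \emph{and} $\models\gamma$, so again we are squarely inside the finite case when we invoke the property of the dual winning $\sigma_\env$. With these in place, no determinacy machinery beyond Lemma~\ref{lem:duality:finite} (reachability-game determinacy) is needed, which is exactly why the theorem is easier here than Theorem~\ref{thm:det}. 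I would keep the write-up to a short paragraph per direction, citing the preceding finite-trace Lemma/Duality, and explicitly noting where finiteness is used.
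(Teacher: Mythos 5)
Your proposal is correct and follows exactly the route the paper intends: the implication direction mirrors the lemma preceding Theorem~\ref{thm:det}, and the converse is the same contraposition argument with Martin's Borel determinacy replaced by the finite-trace duality of Lemma~\ref{lem:duality:finite}. Your explicit bookkeeping of where finiteness enters (the agent strategy forces finiteness in $1 \rightarrow 2$; the definition of realizing $\gamma$ assuming $\omega$ supplies it in $2 \rightarrow 1$) is a welcome clarification of details the paper leaves implicit.
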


Hence to solve synthesis under assumptions we simply solve agent synthesis for the implication. Hence we have:
\begin{theorem}\label{thm:solving:SUA:finite} \hspace{0em}
\begin{enumerate} 
\item Solving \LTLf synthesis under assumptions is $2$\exptime-complete.
\item Solving \DFA synthesis under assumptions is \ptime-complete.
\end{enumerate}
\end{theorem}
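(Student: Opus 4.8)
The plan is to mirror the infinite-trace argument: reduce solving synthesis under assumptions to ordinary agent synthesis for the implication $\omega \limp \gamma$, using Theorem~\ref{thm:det:finite}. For the decision problem this equivalence is immediate from Theorem~\ref{thm:det:finite}, whose proof in turn rests on the duality of Lemma~\ref{lem:duality:finite}. For strategy extraction I would first record the finite-trace analogue of the easy lemma: if an agent strategy $\sigma_\ag$ realizes $\omega \limp \gamma$ in the finite-trace sense — i.e.\ $\forall \sigma_\env.\,\pi_{\sigma_\ag,\sigma_\env}$ is finite and lies in $[[\omega \limp \gamma]] = [[\neg\omega]] \cup [[\gamma]]$ — then $\sigma_\ag$ also realizes $\gamma$ assuming $\omega$. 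Indeed, for any $\sigma_\env \sat \omega$ the play $\pi = \pi_{\sigma_\ag,\sigma_\env}$ is finite (first conjunct), hence $\pi \in [[\omega]]$ because $\sigma_\env \sat \omega$, and since $\pi \in [[\omega \limp \gamma]]$ we conclude $\pi \models \gamma$. So it suffices to produce a finite-state agent strategy realizing $\omega \limp \gamma$.

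For $\SFf = \LTLf$: the formula $\omega \limp \gamma \doteq \neg \omega \vee \gamma$ is again an \LTLf formula whose size is linear in $|\omega| + |\gamma|$, so solving the synthesis-under-assumptions problem reduces to \LTLf agent synthesis for it, which is in $2$\exptime by the result recalled above. For the matching lower bound, observe that $\true$ is an environment assumption (every environment strategy vacuously realizes it, since $[[\true]]$ contains all finite traces), and that synthesis under the assumption $\true$ with goal $\gamma$ is exactly \LTLf agent synthesis for $\gamma$, itself $2$\exptime-hard; hence $2$\exptime-completeness.

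For $\SFf = \DFA$: from \DFA $M_\omega$ and $M_\gamma$ I build, via the classical complement and product constructions, a \DFA $M$ for $\omega \limp \gamma$ of size polynomial in $|M_\omega|\cdot|M_\gamma|$; solving \DFA agent synthesis for $M$ is a reachability game, solvable in \ptime by the least-fixpoint algorithm, and it returns a finite-state agent strategy. For \ptime-hardness, take $\omega$ to be the one-state \DFA accepting all finite traces (an environment assumption) and observe the problem specializes to \DFA agent synthesis, which is \ptime-hard.

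I do not expect a genuine obstacle here; the only care needed is bookkeeping. One must check that the finite-trace notion ``$\sigma_\ag$ realizes $\omega\limp\gamma$'', which bakes in the requirement that the resulting play be \emph{finite}, coincides with the \LTf/\DFA agent-synthesis problem for the single object $\omega\limp\gamma$ — it does, precisely because that finiteness requirement is part of the definition of finite-trace agent realizability — and that both the upper and lower complexity bounds cited for \LTLf and \DFA synthesis transfer unchanged. No constructions beyond Boolean closure of \DFA and the \LTLf-to-\DFA translation already recalled are required.
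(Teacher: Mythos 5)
Your proposal is correct and follows essentially the same route as the paper: invoke Theorem~\ref{thm:det:finite} (resting on the duality of Lemma~\ref{lem:duality:finite}) to replace synthesis under assumptions by ordinary agent synthesis of $\omega \limp \gamma$, then apply the known bounds for \LTLf ($2$\exptime-complete) and \DFA (\ptime via reachability games) synthesis, with the lower bounds obtained by taking $\omega \doteq \true$. The extra bookkeeping you supply (the easy direction for strategy extraction and the explicit hardness instances) is consistent with, and merely fleshes out, the paper's argument.
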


\subsubsection{Planning under assumptions}
Planning and fair planning have recently been studied for \LTLf goals~\cite{DR-IJCAI18,Camacho:KR18,Camacho:ICAPS18}. Here we define and study how to add environment assumptions.

Recall that we represent a planning domain $D$ by the linear-time property $\omega_D$ (Definition~\ref{dfn:domain-env}) which itself was defined as those infinite traces satisfying two conditions. The exact same conditions determine a set of finite traces, also denoted $\omega_D$. Moreover, 
this $\omega_D$ is equivalent to an \LTLf formula of size linear in $D$ and a \DFA of size at most exponential in $D$. 
To see this, replace $\nextX$ by $\tilde{\nextX}$ in the \LTL formula from Lemma~\ref{lem:omegaD:LTL}. That is, 
let $\delta''$ be the $\LTLf$ formula formed from $\delta$ by replacing every term of the form $e'$ by $\tilde{\nextX} e$. Note that if 
$n < len(\pi)$ then $(\pi,n) \models \delta''$ iff $(\pi_n \cap E, \pi_n \cap A, \pi_{n+1} \cap E) \in \Delta$, and if $n = len(\pi)$ then $(\pi,n) \models \delta''$ iff $(\pi_n \cap E, \pi_n \cap A) \in Pre$. 
The promised $\LTLf(E \cup A)$ formula is 
$init \wedge (\always \delta'' \vee \delta'' \until \neg pre)$. 
Also, similar to the \DPW before there is a \DFA of size at most exponential in the size of $D$ equivalent to $\omega_D$. 
To see this, take the \DPW $M_D \doteq (Q,q_{in},T,col)$ from Lemma~\ref{lem:omegaD:DPW} and instead of $col$ define the set 
of final states to be the set $col^{-1}(0)$.

As before, say that $\omega \in \LTf$ is an \emph{environment assumption for the domain $D$} if $\omega_D \wedge \omega$ is environment realizable. 
Define an \emph{\LTf planning under assumptions problem} to be a tuple $P = ((D,\omega),\gamma)$ with $\omega,\gamma \in \LTf$ 
such that $\omega$ is an environment assumption for $D$. To decide if $\omega \in \LTLf/\DFA$ is an environment assumption for $D$ 
we use the next algorithm:
 \begin{tabbing}
===\===\===\===\=\+\kill
\textbf{Alg 2. Deciding if $\omega$ is an environment assumption for $D$}\\
Given domain $D$, and \DFA $M_\omega$.\\
1:\>Convert $D$ into a \DFA $M_D$ equivalent to $\omega_D$.\\
2:\>Form the \DFA $M$ for $(M_D \wedge M_\omega)$.\\
3:\>Decide if $M$ is environment realizable.
\end{tabbing}
Further, if $\omega$ is given as an \LTLf formula, first convert it to a \DFA $M_\omega$ and then run the algorithm. We then have:
\begin{theorem}  \hspace{0em}
\begin{enumerate}
 \item Deciding if $\LTLf$ formula $\omega$ is an environment assumption for the domain $D$ is $2$\exptime-complete. Moreover, it can be solved in 
 \exptime in the size of $D$ and $2$\exptime in the size of $\omega$. 
 \item Deciding if $\DFA$ $\omega$ is an environment assumption for the domain $D$ is in \exptime. Moreover, it can be solved in \exptime in the size 
 of $D$ and \ptime in the size of $\omega$.
\end{enumerate}
\end{theorem}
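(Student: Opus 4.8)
The plan is to analyze Algorithm~2 step by step and bound the cost of each step, reusing the complexity results already established in the paper. The key observation is that this theorem is the finite-trace analogue of the infinite-trace result about deciding if $\omega$ is an environment assumption for a domain $D$, so the proof structure should mirror that one, with \DPW replaced by \DFA and parity/safety games replaced by reachability/safety games.

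First I would handle the \DFA case (part 2). Step~1 of Algorithm~2 converts $D$ into a \DFA $M_D$ equivalent to $\omega_D$; by the construction given just before the theorem (taking the \DPW $M_D$ of Lemma~\ref{lem:omegaD:DPW} and replacing $col$ by the final-state set $col^{-1}(0)$), this \DFA has size at most exponential in $|D|$. Step~2 forms the \DFA $M$ for $M_D \wedge M_\omega$; since \DFA are closed under intersection with only a polynomial (product) blow-up, $|M|$ is polynomial in $|M_D| \cdot |M_\omega|$, hence exponential in $|D|$ and polynomial in $|\omega|$. Step~3 decides if $M$ is environment realizable; by the finite-trace discussion (environment realizability of a \DFA is \ptime, cf.~\cite{ALG02}, via solving the dual safety game — equivalently, by Duality in Lemma~\ref{lem:duality:finite}, negating the answer to agent realizability of the complement, which is also \ptime), this is polynomial in $|M|$. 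Composing: the whole algorithm runs in time exponential in $|D|$ and polynomial in $|\omega|$, giving the \exptime overall bound and the refined bounds of part 2.

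For part 1, given $\omega$ as an \LTLf formula, first convert it to a \DFA $M_\omega$; by the classic three-step translation (alternating automaton, then NFA, then \DFA, as in~\cite{DegVa13}) this is at most doubly exponential in $|\omega|$, and then run the \DFA algorithm above. The $2$\exptime upper bound follows, and measuring separately: the domain contributes only a single exponential (it is fed in as a \DFA of exponential size, then processed in \ptime), while $\omega$ contributes the double exponential from the \LTLf-to-\DFA step — giving the \exptime-in-$|D|$ and $2$\exptime-in-$|\omega|$ refinement. For the matching lower bound, take $D \doteq U$ to be the universal domain, so that $\omega_D$ is trivially satisfiable and the problem of deciding if $\omega$ is an environment assumption for $U$ collapses to deciding if $\omega$ is environment realizable, which is $2$\exptime-hard for \LTLf by the inter-reducibility with \LTLf agent synthesis (itself $2$\exptime-complete~\cite{DegVa15}). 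This yields $2$\exptime-completeness for part 1.

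The main obstacle I anticipate is not any single calculation but getting the finite-trace bookkeeping exactly right in Step~3: unlike the infinite case, environment realizability over finite traces is not symmetric to agent realizability (the asymmetry flagged right after the finite-trace definition, since stopping is agent-controlled), so I must be careful that the reduction used — solving the dual safety game, or negating agent-realizability of $\neg\phi$ via Lemma~\ref{lem:duality:finite} — is the correct one and indeed stays in \ptime on the product \DFA $M$. Once that is pinned down, the rest is routine composition of the stated size and time bounds.
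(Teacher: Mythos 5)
Your proposal is correct and follows essentially the same route as the paper: it analyzes Algorithm~2 step by step, using the exponential-size \DFA for $\omega_D$ obtained from the construction of Lemma~\ref{lem:omegaD:DPW}, the polynomial product for the conjunction, \ptime environment realizability of a \DFA via duality (Lemma~\ref{lem:duality:finite}), the doubly-exponential \LTLf-to-\DFA translation for part~1, and the universal-domain reduction to \LTLf environment realizability for the $2$\exptime lower bound. Your care about the asymmetry of finite-trace realizability in Step~3 is resolved exactly as the paper does it, so no gap remains.
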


\subsubsection{Solving Planning under Assumptions}
As before, there are simple translations between \LTf planning under assumptions and \LTf synthesis under assumptions. And again, solving \LTLf planning under assumptions via such a translation is not fine enough to analyze the complexity in the domain vs the goal/assumption. 
To solve \DFA/\LTLf planning under assumptions use the following simple algorithm:

 \begin{tabbing}
===\===\===\===\=\+\kill
\textbf{Alg 3. Solving \DFA planning under assumptions}\\
Given domain $D$, assumption $M_\omega$, goal $M_\gamma$.\\
1:\>Convert $D$ into a \DFA $M_D$ equivalent to $\omega_D$.\\
2:\>Form the \DFA $M$ for $(M_D \wedge M_\omega) \limp M_\gamma$.\\
3:\>Solve the reachability game on \DFA $M$.
\end{tabbing}

Further, if $\omega$ is given as an \LTLf formula, first convert it to a \DFA $M_\omega$ and then run the algorithm. This gives the upper bounds in the following:
\begin{theorem} \label{thm:solving:PUA:DFW} \hspace{0em} 
\begin{enumerate} 
\item The domain complexity of solving \DFA (resp. \LTLf) planning under assumptions is \exptime-complete.
\item The goal/assumption complexity of solving \DFA (resp. \LTLf) planning under assumptions is \ptime-complete (resp. $2$\exptime-complete). 
\end{enumerate}
\end{theorem}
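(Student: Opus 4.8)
The plan is to obtain the upper bounds from an analysis of Alg~3, and the matching lower bounds by reductions from problems already known to be hard.

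\emph{Upper bounds.} First I would argue correctness of Alg~3. By the straightforward finite-trace counterpart of Theorem~\ref{prop:planning to synthesis}, an agent strategy $\sigma_\ag$ solves $P=((D,\omega),\gamma)$ iff $\sigma_\ag$ realizes $\gamma$ assuming $\omega_D\wedge\omega$; since $\omega_D\wedge\omega\in\LTf$ is an environment assumption (this is exactly the condition that $\omega$ be an environment assumption for $D$), Theorem~\ref{thm:det:finite} says such a $\sigma_\ag$ exists iff some agent strategy realizes $(\omega_D\wedge\omega)\limp\gamma$; and realizing a \DFA property over finite traces is, as recalled above, exactly winning the reachability game on that \DFA, with the agent's freedom to stop built into the arena. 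So Alg~3 is correct and also returns a finite-state strategy. Now the size analysis: step~1 produces $M_D$ of size at most exponential in $|D|$ and \emph{independent of $\omega$ and $\gamma$} (take the automaton of Lemma~\ref{lem:omegaD:DPW} with final states $col^{-1}(0)$, as described above); step~2 uses the classic polynomial Boolean constructions on \DFA, so $|M|$ is polynomial in $|M_D|\cdot|M_\omega|\cdot|M_\gamma|$; step~3 solves a reachability game in time polynomial in $|M|$ (a least-fixpoint/attractor computation). Thus, fixing $\omega,\gamma$ as \DFA makes $|M|$ polynomial in $|M_D|$, hence exponential in $|D|$, so the domain complexity is in \exptime; fixing the domain makes $M_D$ a constant, so $|M|$ is polynomial in $|M_\omega|+|M_\gamma|$ and the goal/assumption complexity is in \ptime. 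For $\SF=\LTLf$ one first converts $\omega$ and $\gamma$ to \DFA, at doubly-exponential cost in $|\omega|,|\gamma|$ (alternating automaton, then nondeterministic finite automaton, then \DFA, the last two steps each exponential); this does not change the dependence on $|D|$, so the domain complexity is still in \exptime, while the goal/assumption complexity becomes $2$\exptime.

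\emph{Lower bounds.} For the goal/assumption lower bounds I fix the universal domain $U$: since $Str_\env(\omega_U)=Str_\env$, solving $((U,\true),\gamma)$ is exactly solving \SF synthesis for $\gamma$. Hence \ptime-hardness of the \DFA case follows from \ptime-hardness of \DFA synthesis (equivalently, \ptime-hardness of reachability games), and $2$\exptime-hardness of the \LTLf case from $2$\exptime-hardness of \LTLf synthesis~\cite{DegVa15}. For the domain lower bound (simultaneously for \DFA and \LTLf) I reduce from FOND strong-plan existence for a reachability goal, which is \exptime-hard~\cite{Rintanen:ICAPS04}. Given such an instance with domain $D$ and Boolean goal description $G$, build in polynomial time a domain $D'$ obtained from $D$ by adding one fresh fluent $\mathit{reached}$ whose transition relation makes it true exactly when $G$ first holds and keeps it true thereafter; keep the assumption $\omega\doteq\true$ and use the \emph{fixed}, tiny goal $\gamma\doteq\eventually\mathit{reached}$ (a two-state \DFA, or the \LTLf formula $\eventually\mathit{reached}$, i.e.\ ``$\mathit{reached}$ holds at some position''). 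One then checks: if the agent ever plays an unavailable action, the environment (now unconstrained by $\omega_{D'}$) can refuse to ever set $\mathit{reached}$, violating $\gamma$; so any solving strategy uses only available actions, whence it must force $G$ to be reached against all $\Delta$-respecting environments and then stop --- i.e.\ it is exactly a strong plan, and conversely every strong plan yields such a strategy. Hence the domain complexity is \exptime-hard, and together with the upper bound it is \exptime-complete for both \DFA and \LTLf.

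\emph{Main obstacle.} The routine parts are the \DFA size bookkeeping in Alg~3 and the invocation of already-established synthesis hardness results. The step requiring care is the domain lower bound: the goal and the assumption must be made genuinely \emph{fixed} (independent of the FOND instance), so all instance-specific data has to be pushed into the transition relation of $D'$; and one must verify that the finite-trace semantics --- when the agent may stop, how termination interacts with ``$\eventually\mathit{reached}$'', and the fact that $\omega_{D'}$ stops constraining the environment as soon as an unavailable action is played --- lines up so that solving strategies for $((D',\true),\eventually\mathit{reached})$ correspond precisely to strong plans. This is the only part of the theorem that is not an immediate corollary of the preceding algorithmic analysis.
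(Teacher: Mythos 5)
Your proposal is correct and follows essentially the same route as the paper: the upper bounds come from the size analysis of Alg~3 (exponential \DFA $M_D$ for $\omega_D$, polynomial Boolean product, polynomial reachability-game solving, plus the doubly-exponential \LTLf-to-\DFA conversion), and the lower bounds come from \DFA/\LTLf synthesis over the universal domain with $\omega \doteq \true$ and from \exptime-hardness of FOND reachability planning~\cite{Rintanen:ICAPS04}. The only difference is that you spell out, more carefully than the paper's one-line remark, how to make the goal and assumption genuinely fixed in the domain-complexity lower bound (pushing the reachability condition into a fresh fluent of $D'$ and checking the finite-trace/unavailable-action subtleties), which is a legitimate filling-in of a detail rather than a different argument.
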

For the lower bounds, setting $\omega \doteq \true$ results in FOND with reachability goals, known to be \exptime-hard~\cite{Rintanen:ICAPS04}; and additionally taking the domain $D$ to be the universal domain results in \DFA (resp. \LTLf) synthesis, known to be \ptime-hard~\cite{ALG02} (resp. $2$\exptime-hard~\cite{DegVa15}).

Finally, if $P = ((D,\omega),\gamma)$ is an \LTLf planning under assumptions problem, say that $\sigma_\ag$ \emph{fairly solves} $P$ if for every $\sigma_\env \sat \omega_D \wedge \omega$ we have that if $\pi_{\sigma_\ag,\sigma_\env} \in [[\omega_{D,fair}]]$ then $\pi_{\sigma_\ag,\sigma_\env}$ is finite and satisfies $\gamma$ (here $\omega_{D,fair}$ from Example~\ref{ex} is defined so that it now also includes all finite traces). We remark that Alg $2$ applies unchanged. 
{However, to solve the fair \LTLf planning problem, we do not know a better way, in general, than translating the problem into one over infinite traces and applying the techniques from the previous section.}


\section{Conclusion and Outlook}

While we illustrate synthesis and planning under assumptions expressed in
linear-time specifications, our definitions immediately apply to assumptions
expressed in branching-time specifications, e.g., $\CTLS$, $\mu$-calculus, and
tree automata.  As future work, it is of great interest to study synthesis
under assumptions in the branching time setting so as to devise restrictions on
\emph{possible agent behaviors} with certain guarantees, e.g., remain in an
area from where the agent can enforce the ability to reach the recharging doc,
whenever it needs to, in the spirit of~\cite{LagoPT02}.

Although our work is in the context of reasoning about actions and planning, we
expect it can also provide insights to verification and to multi-agent systems.
In particular, the undesirable drawback of the agent being able to falsify an
assumption when synthesizing $Assumption \limp Goal$ is well known, and it has
been observed that it can be overcome when the $Assumption$ is environment
realizable~\cite{DBLP:journals/tosem/DIppolitoBPU13,DBLP:journals/acta/BrenguierRS17}.
Our Theorem~\ref{thm:det} provides the principle for such a solution.
Interestingly, various degrees of cooperation to fulfill assumptions among
adversarial agents has been considered, e.g.,
\cite{CH07,BEK15,DBLP:journals/acta/BrenguierRS17} and we believe that a work
like present one is needed to establish similar principled foundations.

Turning to the multi-agent setting, there, agents in a common environment
interact with each other and may have their own objectives. Thus, it makes
sense to model agents not as hostile to each other, but as rational, i.e.,
agents that act to achieve their own objectives. \emph{Rational
synthesis}~\cite{KPV14} (as compared to classic synthesis) further requires
that the strategy profile chosen by the agents is in equilibrium (various
notions of equilibrium may be used). It would be interesting to investigate
rational synthesis under environment assumptions, in the sense that all agents
also make use of their own assumptions about their common environment. We
believe that considering assumptions as sets of strategies rather than sets of
traces will serve as a clarifying framework also for the multi-agent setting.

{Finally, there are a number of open questions regarding the computational complexity 
of solving synthesis/planning under assumptions, i.e., what is the exact complexity of {Fair \LTL/\LTLf planning under assumptions}? what is the assumption complexity of \LTL/\LTLf synthesis under assumptions? Here, the \emph{assumption} complexity is the complexity of the problem assuming the domain and goal are fixed, and the only input to the problem is the assumption formula/automaton.}


\bibliography{main}
\bibliographystyle{aaai}

\end{document}